\title{Algorithms for new types of fair stable matchings} 
\titlerunning{Algorithms for new types of fair stable matchings} 
\author{Frances Cooper}{School of Computing Science, University of Glasgow, Glasgow, Scotland, UK \and \url{https://www.francescooper.net}}{f.cooper.1@research.gla.ac.uk}{https://orcid.org/0000-0001-6363-9002}{Supported by an Engineering and Physical Sciences Research Council Doctoral Training Account EP/N509668/1}
\author{David Manlove}{School of Computing Science, University of Glasgow, Glasgow, Scotland, UK \and \url{http://www.dcs.gla.ac.uk/~davidm}}{david.manlove@glasgow.ac.uk}{https://orcid.org/0000-0001-6754-7308}{Supported by Engineering and Physical Sciences Research Council grant EP/P028306/1}
\authorrunning{F. Cooper and D. Manlove} 
\keywords{Stable marriage; Algorithms; Optimality; Fair stable matchings; Regret-equality; Min-regret sum} 
\newcolumntype{L}[1]{>{\raggedright\let\newline\\\arraybackslash\hspace{0pt}}m{#1}}
\newcolumntype{C}[1]{>{\centering\let\newline\\\arraybackslash\hspace{0pt}}m{#1}}
\newcolumntype{R}[1]{>{\raggedleft\let\newline\\\arraybackslash\hspace{0pt}}m{#1}}
\newacronym{sm}{\sc sm}{Stable Marriage problem}
\newacronym{smi}{\sc smi}{Stable Marriage problem with Incomplete lists}
\newacronym{hr}{\sc hr}{Hospitals/Residents problem}
\newacronym{hr-gr}{\sc hr-gr}{Hospitals/Residents problem with Grouped Residents}
\newacronym{smi-gw}{\sc smi-gw}{Stable Marriage problem with Incomplete lists and Grouped Women}
\newacronym{nrmp}{NRMP}{National Resident Matching Program}
\newacronym{ip}{IP}{Integer Programming}
\begin{document}

\maketitle

\begin{abstract}
We study the problem of finding ``fair'' stable matchings in the \emph{\acrlong{smi}} (\acrshort{smi}). For an instance $I$ of \acrshort{smi} there may be many stable matchings, providing significantly different outcomes for the sets of men and women. We introduce two new notions of fairness in \acrshort{smi}. Firstly, a \emph{regret-equal stable matching} minimises the difference in ranks of a worst-off man and a worst-off woman, among all stable matchings. Secondly, a \emph{min-regret sum stable matching} minimises the sum of ranks of a worst-off man and a worst-off woman, among all stable matchings. We present two new efficient algorithms to find stable matchings of these types. Firstly, the \emph{Regret-Equal Degree Iteration Algorithm} finds a regret-equal stable matching in $O(d_0 nm)$ time, where $d_0$ is the absolute difference in ranks between a worst-off man and a worst-off woman in the man-optimal stable matching, $n$ is the number of men or women, and $m$ is the total length of all preference lists. Secondly, the \emph{Min-Regret Sum Algorithm} finds a min-regret sum stable matching in $O(d_s m)$ time, where $d_s$ is the difference in the ranks between a worst-off man in each of the woman-optimal and man-optimal stable matchings. Experiments to compare several types of fair optimal stable matchings were conducted and show that the Regret-Equal Degree Iteration Algorithm produces matchings that are competitive with respect to other fairness objectives. On the other hand, existing types of ``fair'' stable matchings did not provide as close an approximation to regret-equal stable matchings.
\end{abstract}

\section{Introduction}

\subsection{Background}
\label{sm_re_background}
The \acrlong{sm} (\acrshort{sm}) was first introduced by Gale and Shapley \cite{GS62} in their seminal paper ``College Admissions and the Stability of Marriage'', and comprises a set of men and a set of women, where each man has a strict preference over all women and vice versa. A \emph{matching} in this setting is an assignment of men to women such that no man or woman is multiply assigned. A \emph{stable matching} is then a matching in which there is no man-woman pair who would rather be assigned to each other than to their assigned partners. 

In this paper we study an extension of \acrshort{sm}, known as the \acrlong{smi} (\acrshort{smi}). 
An instance $I$ of \acrshort{smi} comprises two sets of agents, men $U=\{m_1, m_2, ..., m_n\}$ and women $W=\{w_1, w_2, ..., w_n\}$. Each man (woman) ranks a subset of women (men) in strict preference order. Let $m$ be the total length of all preference lists. A man $m_i$ finds a woman $w_j$ \emph{acceptable} if $w_j$ appears on $m_i$'s preference list. Similarly, a woman $w_j$ finds a man $m_i$ \emph{acceptable} if $m_i$ appears on $w_j$'s preference list.  A pair $(m_i, w_j)$ is \emph{acceptable} if $m_i$ finds $w_j$ acceptable and $w_j$ finds $m_i$ acceptable. A \emph{matching} $M$ in this context is an assignment of men to women comprising acceptable pairs such that no man or woman is assigned to more than one person. Given a matching $M$, denote by $M(m_i)$ the woman $m_i$ is assigned to in $M$ (or if $m_i$ is unassigned then $M(m_i)$ is undefined); the notation $M(w_j)$ is defined similarly for a woman $w_j$. A pair $(m_i, w_j)$ is a \emph{blocking pair} if 1) $(m_i,w_j)$ is an acceptable pair, 2) $m_i$ is unmatched or prefers $w_j$ to $M(m_i)$, and 3) $w_j$ is unmatched or prefers $m_i$ to $M(w_j)$.
Matching $M$ is \emph{stable} if it has no blocking pair.

In \acrshort{smi}, a stable matching always exists, and may be found in linear time using the Man-oriented Gale-Shapley Algorithm or the Woman-oriented Gale-Shapley Algorithm \cite{GS62}. The Man-oriented Gale-Shapley Algorithm produces the \emph{man-optimal} stable matching, that is, the unique stable matching in which each man is assigned their most-preferred woman in any stable matching. Unfortunately, the man-optimal stable matching is also \emph{woman-pessimal} i.e., each woman is assigned their least-preferred man in any stable matching. Similarly the Woman-oriented Gale-Shapley Algorithm produces the woman-optimal (man-pessimal) stable matching.

Let $I$ be an instance of \acrshort{smi} and $n$ be the number of men or women in $I$. Let $\mathcal{M}$ be the set of all stable matchings in $I$, which may be exponential in size \cite{IL86}. We note that by the \emph{``Rural Hospitals'' Theorem} \cite{GS85}, the same set of men and women are assigned in all stable matchings of $\mathcal{M}$. Thus in order to simplify future descriptions, we are able to use the Man-oriented Gale-Shapley Algorithm to find and remove all unassigned men and women from $I$ prior to any other operation. Without loss of generality, we assume that from this point onwards, all men and women in $I$ are assigned in any stable matching of $I$.

 For an instance of \acrshort{smi}, it is natural to wish to find a stable matching in $\mathcal{M}$ which is in some sense fair for both sets of men and women. The \emph{rank} of $m_i$ with respect to $M$ is defined as the location of $M(m_i)$ on $m_i$'s preference list, and is denoted $\text{rank}(m_i, M(m_i))$. An analogous definition of rank$(w_j,M(w_j))$ holds for a woman $w_j$. We define the \emph{man-degree} $d_U(M)$ of $M$ as the largest rank of all men in $M$, that is, $d_U(M) = \max \{ \text{rank}(m_i, M(m_i)) : m_i \in U \}$. Again an analogous definition of $d_W(M)$ holds for women. Define the \emph{degree pair} of $M$, denoted $d(M)=(a, b)$ as the tuple of man- and woman-degrees in $M$, where $a=d_U(M)$ and $b=d_W(M)$. The \emph{man-cost} $c_U(M)$ of matching $M$ is defined as the sum of ranks of all men, that is, $c_U(M)=\sum_{m_i \in U}{\text{rank}(m_i, M(m_i))}$. A similar definition of $c_W(M)$ holds for women.
 Finally, the \emph{degree} of a matching $M$ is given by $d(M) = \max \{ d_U(M), d_W(M)\}$ and the \emph{cost} of matching $M$ is given by $c(M) = c_U(M) + c_W(M)$.

 We now define four notions of fairness in the \acrshort{smi} context. Given a stable matching $M$, define its \emph{balanced score} to be $\max \{c_U(M),c_W(M)\}$. $M$ is \emph{balanced} \cite{Fed90} if it has minimum balanced score over all stable matchings in $\mathcal{M}$. Feder \cite{Fed90} showed that the problem of finding a balanced stable matching in \acrshort{smi} is NP-hard, although can be approximated within a factor of $2$. This approximation factor was improved to $2 - \frac{1}{l}$, where $l$ is the length of the longest preference list, by Eric McDermid as noted in Manlove \cite[pg. 110]{Man13}. Gupta et al.\ \cite{GRSZ19} showed that a balanced stable matching can be found in $O(f(n)8^t)$ time when parameterised by $t = k - \min\{c_U(M_0), c_W(M_z)\}$, where $f(n)$ is a function polynomial in $n$ and $k$ is the balanced score. The \emph{sex-equal score} of $M$ is defined to be $|c_U(M)-c_W(M)|$. $M$ is \emph{sex-equal} \cite{GI89} if it has minimum sex-equal score over all stable matchings in $\mathcal{M}$. Finding a sex-equal stable matching was shown to be NP-hard by Kato \cite{Kat93}. 
This result was later strengthened by McDermid and Irving \cite{MI14} who showed that, even in the case when preference lists have length at most $3$, the problem of deciding whether there is a stable matching with sex-equal score $0$ is NP-complete.
 Additionally, a polynomial-time algorithm to find a sex-equal stable matching is described for instances in which men have preference lists of length at most $2$ (women's preference lists remaining unbounded) \cite{MI14}. A stable matching $M$ is \emph{egalitarian} \cite{Knu76} if $c(M)$ is minimum over all stable matchings in $\mathcal{M}$, and may be found in $O(m^{1.5})$ time \cite{Fed90}. Finally, a stable matching $M$ is \emph{minimum regret} \cite{Knu76} if $d(M)$ is minimum among all stable matchings in $\mathcal{M}$. It is possible to find a minimum regret stable matching in $O(m)$ time \cite{Gus87}. These definitions of fairness are summarised in Table \ref{group_sm_optimisation_options}.

 \begin{table}[h] \centerline{\begin{tabular}{ p{2.3cm} | p{5.2cm} p{5.2cm} }\hline\hline 
& Cost & Degree \\
\hline
\\[-0.5em]
\multirow{2}{2.4cm}{Minimising the maximum} & $\underset{M \in \mathcal{M}}{\min} \max \{c_U(M), c_W(M)\}$ & $\underset{M \in \mathcal{M}}{\min} \max \{d_U(M), d_W(M)\}$ \\
\\[-0.5em]
& Balanced stable matching \cite{Fed90} & Minimum regret stable matching \cite{Knu76} \\
\\
\multirow{2}{2.4cm}{Minimising the absolute difference} & $\underset{M \in \mathcal{M}}{\min} |c_U(M) - c_W(M)|$ & $\underset{M \in \mathcal{M}}{\min} |d_U(M) - d_W(M)|$ \\
\\[-0.5em]
& Sex-equal stable matching \cite{GI89} & Regret-equal stable matching *\\
\\
\multirow{2}{2.4cm}{Minimising the sum} & $\underset{M \in \mathcal{M}}{\min} (c_U(M) + c_W(M))$ & $\underset{M \in \mathcal{M}}{\min} (d_U(M) + d_W(M))$  \\
\\[-0.5em]
& Egalitarian stable matching \cite{Knu76} & Min-regret sum stable matching *\\
 \hline\hline \end{tabular}} \caption{Commonly used definitions of fair stable matchings in \acrshort{smi}. Our contributions are labelled with an *.} \label{group_sm_optimisation_options} \end{table} 
 
 In Table \ref{group_sm_optimisation_options} there are two new natural definitions of fairness that can be studied.

\begin{itemize}
	\item 
	We define the \emph{regret-equality score} $r(M)$ as $|d_U(M) - d_W(M)|$ for a given stable matching $M$. $M$ is \emph{regret-equal} if $r(M)$ is minimum, taken over all stable matchings in $\cal M$. Note that in general we will prefer a regret-equal stable matching $M$ such that $d_U(M) + d_W(M)$ is minimised (e.g. $d(M) = (3, 3)$ rather than $d(M) = (10, 10)$). 
	\item 
	We define the \emph{regret sum} as $d_U(M) + d_W(M)$ for a given stable matching $M$. $M$ is \emph{min-regret sum} if $d_U(M) + d_W(M)$ is minimum taken over all stable matchings in $\cal M$.
\end{itemize}

\subsection{Motivation}
 \label{sm_re_motivation}
Matching algorithms are widely used in the real world to solve allocation problems based on  \acrshort{smi} and its variants. A famous example of this is the \emph{\acrlong{nrmp}} (\acrshort{nrmp}). This scheme has been running in the US since 1952, and involves the allocation of thousands of graduating medical students to hospitals \cite{PR95}. Other matching schemes involve the allocation of students to projects \cite{AIM07} and the allocation of kidney donors to kidney patients \cite{BKMPABCDDHHJKKNSSVV19}.

Let mentees take the place of men and mentors take the place of women. Thus, mentees (mentors) rank a subset of mentors (mentees) and may only be allocated one mentor (mentee) in any matching. If we used the (renamed) Mentee-Oriented Gale-Shapley Algorithm \cite{GS62} to find a stable matching of mentees to mentors, then we would find a mentee-optimal stable matching $M$. However, as previously discussed, this would also be a mentor-pessimal stable matching. A similar but reversed situation happens using the (also renamed) Mentor-Oriented Gale-Shapley Algorithm \cite{GS62}. Therefore we may wish to find a stable matching that is in some sense fair between mentees and mentors using some of the criteria described in the previous section. All the types of fair stable matchings described in Table \ref{group_sm_optimisation_options} are viable candidates. However, as previously described, each of the problems of finding a balanced stable matching or a sex-equal stable matching is NP-hard, and there are existing polynomial time algorithms in the literature to find only two types of fair stable matchings, namely an egalitarian stable matching (in $O(m^{1.5})$ time) \cite{Fed90} and a minimum regret stable matching (in $O(m)$ time) \cite{Gus87}. Therefore, additional definitions of new, fair stable matchings and polynomial-time algorithms to calculate them provide additional choice for a matching scheme administrator. 

Moreover, we may be interested in finding a measure that gives a worst-off mentee a partner of rank as close as possible to that of a worst-off mentor. However, from our experimental work in Section \ref{sm_re_exps_sec}, we found that there was no other type of optimal stable matching that closely approximates the regret-equality score of the regret-equal stable matching. Indeed, results show that there exist regret-equal stable matchings with balanced score, cost and degree that are close to that of a balanced stable matching, an egalitarian stable matching and a minimum regret stable matching, respectively. This motivates the search for efficient algorithms to produce a regret-equal stable matching that has ``good'' measure relative to other types of fair stable matching.

 Whilst the practical motivation for studying min-regret sum stable matchings may not be as strong as in the regret-equality case, theoretical motivation comes from completing the study of the algorithmic complexity of computing all types of fair stable matchings relative to cost and degree, as shown in Table \ref{group_sm_optimisation_options}.

 \subsection{Contribution}
 \label{sm_re_contribution}

In this paper, we present two efficient algorithms: one to find a regret-equal stable matching, and one to find a min-regret sum stable matching, in an instance $I$ of \acrshort{smi}. Let $M_0$ and $M_z$ be the man-optimal and woman-optimal stable matchings in $I$. First we present the \textit{Regret-Equal Degree Iteration Algorithm} (REDI), to find a regret-equal stable matching in an instance $I$ of \acrshort{smi}, with time complexity $O(d_0 nm)$, where $d_0 = |d_U(M_0) - d_W(M_0)|$. This is the main result of the paper. Second we present the \textit{Min-Regret Sum Algorithm} (MRS), to find a min-regret sum stable matching in an instance $I$ of \acrshort{smi}, with time complexity $O(d_s m)$,  where $d_s = d_U(M_z) - d_U(M_0)$. In addition to this theoretical work, the REDI algorithm was implemented and its performance was compared against an algorithm to enumerate all stable matchings \cite{Gus87} (exponential in the worst case). Finally, experiments were conducted to compare six different types of optimal stable matchings (balanced, sex-equal, egalitarian, min-regret, regret-equal, min-regret sum), and output from Algorithm REDI, over a range of measures (including balanced score, sex-equal score, cost, degree, regret-equality score, regret sum). In addition to the observations already discussed in Section \ref{sm_re_motivation}, we found a large variation in sex-equal scores and regret-equality scores among the six different types of optimal stable matching, and, a far smaller variation for the balanced score, cost, degree and regret sum measures. This smaller variation also includes outputs of Algorithm REDI, indicating that we are able to find a regret-equal stable matching in polynomial time with a likely good balanced score, cost and degree using this algorithm. Indeed, we find in practice that Algorithm REDI approximates these types of optimal stable matchings at an average of $9.0\%$, $1.1\%$ and $3.0\%$ over their respective optimal values, for randomly-generated instances with $n=1000$.

 \subsection{Structure of the paper}
 \noindent Section \ref{sm_re_fdotp} describes a \emph{rotation} and related concepts in \acrshort{smi} that will be used later in the paper. Sections \ref{sm_re_di_sec} and \ref{sm_mrs_sec} describe Algorithm REDI and Algorithm MRS respectively, giving in each case pseudocode, correctness proofs and time complexity calculations. An experimental evaluation is given in Section \ref{sm_re_exps_sec}. Finally, future work is presented in Section \ref{sm_re_fw_sec}.

\section{Structure of stable matchings}
\label{sm_re_fdotp}

For some stable matching $M$ in an instance $I$ of \acrshort{smi}, let $s(m_i, M)$ denote the next woman on $m_i$'s preference list (starting from $M(m_i)$) who prefers $m_i$ to $M(s(m_i, M))$ (their partner in $M$). A \emph{rotation} $\rho$ is then a sequence of man-woman pairs $\{(m_1, w_1), (m_2, w_2), ..., (m_q, w_q)\}$ in $M$, such that $m_{i+1} = M(s(m_i, M))$ for $1 \leq i \leq q$ where addition is taken modulo $q$ \cite{ILG87}. We say rotation $\rho$ is \emph{exposed} on $M$ if $\{(m_1, w_1), (m_2, w_2), ..., (m_q, w_q)\} \subseteq M$. If $\rho$ is exposed on $M$, we may \emph{eliminate} $\rho$ on $M$, that is, remove all pairs of $\rho$ from $M$ and add pairs $(m_i, w_{i+1})$ for $1 \leq i \leq q$, where addition is taken modulo $q$, in order to produce another stable matching $M'$ of $I$. The \emph{rotation poset} $R_p(I)$ of $I$ indicates the order in which rotations may be eliminated. Rotation $\rho$ is said to \emph{precede} rotation $\tau$ if $\tau$ is not exposed until $\rho$ has been eliminated. There is a one-to-one correspondence between the set of stable matchings and the set of closed subsets of $R_p(I)$ \cite[Theorem 3.1]{ILG87}. Gusfield and Irving \cite{GI89} describe a graphical structure known as the \emph{rotation digraph} $R_d(I)$ of $I$ which is based on $R_p(I)$ and allows for the enumeration of all stable matchings in $O(m+n|\mathcal{M}|)$ time. 

Let $R$ be the set of rotations of $I$. Then $R_j(M)$ is the set of rotations that contain a women of rank $j$ in $M$, that is, $R_j(M)=\{\sigma \in R : (m, w)\in \sigma \wedge \text{rank}(w, M(w))=j) \}$. Let $M_z$ be the woman-optimal stable matching \cite{GS62}. For any stable pair $(m_i, w_j) \notin M_z$, let $\phi(m_i, w_j)$ denote the unique rotation containing pair $(m_i, w_j)$. Finally, denote by $c(\rho)$ the closure of rotation $\rho$ and similarly denote by $c(R')$  the closure of set of rotations $R'$. We say that the closure of an undefined rotation or an empty set of rotations is the empty set. 

\section{Algorithm to find a regret-equal stable matching in SMI}
\label{sm_re_di_sec}

\subsection{Description of the Algorithm}
\label{sm_re_doa}

Algorithm REDI, which finds a regret-equal stable matching in a given instance $I$ of \acrshort{smi}, is presented as Algorithm \ref{alg_sm_reg_eq}. For an instance $I$ of \acrshort{smi}, Algorithm REDI begins with operations to find the man-optimal and woman-optimal stable matchings, $M_0$ and $M_z$, found using the Man-oriented and Women-oriented Gale-Shapley Algorithm \cite{GS62}. The set of rotations $R$ is also found using the Minimal Differences Algorithm \cite{ILG87}. 

Let $d(M_0)=(a_0,b_0)$. If $a_0=b_0$ then we must have an optimal stable matching and so we output $M_0$ on Line \ref{alg_sm_reg_eq_M0_return}. If $a_0 > b_0$ then any other matching $M'$, where $d(M') = (a', b')$, must have $a' \geq a_0$ and $b' \leq b_0$ since any rotation (or combination of rotations) eliminated on the man-optimal matching $M_0$ will make men no better off and women no worse off. Therefore $M_0$ is optimal and so it is returned on Line \ref{alg_sm_reg_eq_M0_return}. Now suppose $a_0 < b_0$. Throughout the algorithm we save the best matching found so far to the variable $M_{opt}$ starting with $M_0$. We know that a matching exists with $d_0=b_0 - a_0$ and so we try to improve on this, by finding a matching $M$ with $r(M) < d_0$. 

We create several `columns' of possible degree pairs of a regret-equal matching as follows. The top-most pairs for columns $k \geq 1$ are given by the sequence 
$$\big( (a_0, b_0), (a_0 + 1, b_0), (a_0 + 2, b_0), ..., (\min \{n, 2b_0 - a_0 - 1\}, b_0) \big).$$

The sequence of pairs for column $k$ $(1 \leq k \leq \min \{2d_0, n-a_0+1\})$ from top to bottom is given by
$$\big( (a_0 + k - 1, b_0), (a_0 + k - 1, b_0 - 1), (a_0 + k - 1, b_0 - 2), ..., (a_0 + k - 1, \max \{a_0 - d_0 + k, 1\}) \big).$$

At this point as long as the size $n$ of the instance satisfies $n \geq 2b_0 - a_0 - 1$ and $a_0 - d_0 + 1 \geq 1$, the possible degree pairs of a regret-equal matching are shown in Figure \ref{re_ex_differences} of Appendix \ref{sm_re_degpairs_supp}. We know this accounts for all possible degree pairs since, as above, if $M'$ is any  matching not equal to $M_0$, where $d(M') = (a', b')$, it must be that $a' \geq a_0$ and $b' \leq b_0$. Setting $b' = b_0$, the largest $a'$ could be is given by $b_0$ added to the maximum possible improved difference $d_0 - 1$, that is, $a'=b_0+d_0-1 = 2b_0 - a_0 - 1$. If $n < 2b_0 - a_0 - 1$ then we only consider the first $n - a_0 + 1$ columns in Figure \ref{re_ex_differences}. The $a_0 - d_0 + k$ value is obtained by noting that if $x$ is the final value of women's degree for the column sequence above then $a_0 + k - 1 - x = d_0 - 1$ and so $x = a_0 + k - d_0$. 
Figure \ref{re_ex_differences_2} of Appendix \ref{sm_re_degpairs_supp} shows an example of the possible regret-equal degree pairs when $d(M_0) = (2, 6)$ and $n \geq 9$.

The column operation (Algorithm \ref{alg_sm_reg_eq_sub}) works as follows. Let local variable $M$ hold the current matching for this column, and let local variable $Q$ be the set of rotations corresponding to $M$. Iteratively we first test if $r(M) < r(M_{opt})$ setting $M_{opt}$ to $M$ if so. We now check whether $d_U(M) \geq d_W(M)$. If it is, then any further rotation for this column will only make $r(M)$ larger, and so we stop iterating for this column, returning $M_{opt}$. Next, we find the set of rotations $Q'$ in the closure of $R_b(M) \subseteq R$ that are not already eliminated to reach $M$. If eliminating these rotations would either increase the men's degree or not decrease the women's degree, then we return $M_{opt}$. Otherwise, set $M$ to be the matching found when eliminating these rotations. 

If after the column operation, $d_U(M_{opt}) = d_W(M_{opt})$, then we have a regret-equal matching and it is immediately returned on Lines \ref{alg_sm_reg_eq_first_return} or \ref{alg_sm_reg_eq_second_return} of Algorithm \ref{alg_sm_reg_eq}.


\begin{algorithm} [t!]

  \caption{REDI($I$), returns a regret-equal stable matching for an instance $I$ of \acrshort{smi}.}
	\begin{algorithmic}[1]
          \Require An instance $I$ of \acrshort{smi}.
          \Ensure Return a regret-equal stable matching $M_{opt}$.           
          \State $M_0 \gets$ MGS($I$) \Comment $M_0$ is the man-optimal stable matching found using the Man-oriented Gale-Shapley Algorithm (MGS) \cite{GS62}. \label{alg_sm_reg_eq_man-opt}
          \State $M_z \gets$ WGS($I$) \Comment $M_z$ is the woman-optimal stable matching found using the Woman-oriented Gale-Shapley Algorithm (WGS) \cite{GS62}. 

          \State $R \gets$ MIN-DIFF($I$) \Comment $R$ is the set of rotations found using the Minimal Differences Algorithm (MIN-DIFF) \cite{ILG87}.
          \If {$d_U(M_0) \geq d_W(M_0)$} 
          \State \Return $M_0$ \label{alg_sm_reg_eq_M0_return}
          \EndIf
          \State $M_{opt} \gets M_0$ \Comment $M_{opt}$ is the best stable matching found so far.
          	\State $M_{opt} \gets $REDI-COL$(I, M_0, \emptyset, M_{opt})$ \Comment Find the best matching for the first column. \label{alg_sm_reg_eq_subcall_1}
          	\If {$r(M_{opt}) = 0$}
          \State \Return $M_{opt}$ \label{alg_sm_reg_eq_first_return}
          \EndIf

                    \ForAll{$m_i \in U$} \Comment For each man. \label{alg_sm_reg_eq_for}
          \State $M \gets M_0$ \Comment $M$ is the matching we start from for $m_i$ at the beginning of each column.
          	\State $Q \gets \emptyset$ \Comment $Q$ is the set of rotations corresponding to $M$.
          \While {$(m_i, M(m_i)) \notin M_z$ \textbf{and} $d_U(M) < d_W(M)$} \label{alg_sm_reg_eq_while}
          \State $\rho = \phi(m_i, M(m_i))$
          \State $a \gets d_U(M)$
          \State $Q' \gets c(\rho) \backslash Q$ 
          \State $M \gets M / Q'$ \Comment Rotations in $Q'$ are eliminated in order defined by the rotation poset of $I$.\label{alg_sm_reg_eq_elimination_1}
          	\State $Q \gets Q \cup Q'$
		\If {$d_U(M) > a$ \textbf{and} $\text{rank}(m_i, M(m_i))=d_U(M)$} \Comment The men's degree has increased and $m_i$ is a worst ranked man in $M$. \label{sm_re_alg_condition_for_colop}
          \State $M_{opt} \gets$REDI-COL$(I, M, Q, M_{opt})$ \Comment Find the best matching for this column. \label{alg_sm_reg_eq_subcall_2}
          	\If {$r(M_{opt}) = 0$}
          \State \Return $M_{opt}$ \label{alg_sm_reg_eq_second_return}
          \EndIf
         	\EndIf
          \EndWhile
          \EndFor 
          \State \Return $M_{opt}$ \label{alg_sm_reg_eq_third_return}

     	\end{algorithmic}
	\label{alg_sm_reg_eq}
\end{algorithm}

The column operation described above is called first from the man-optimal stable matching $M_0$ on Line \ref{alg_sm_reg_eq_subcall_1}, to iterate down the first column. Then for each man $m_i$ we do the following. Let $M$ be set to $M_0$. Iteratively we eliminate $(m_i, M(m_i))$ from $M$ by eliminating rotation $\rho$ and its predecessors (not already eliminated to reach $M$) such that $(m_i, M(m_i)) \in \rho$. We continue doing this until both the men's degree increases and $\text{rank}(m_i,M(m_i))=d_U(M)$ (in the same operation). This has the effect of jumping our focus from some column of possible degree pairs, to another column further to the right with $m_i$ being one of the lowest ranked men in $M$. Once we have moved to a new column we perform the column operation described above. If either $m_i$ has the same partner in $M$ as in $M_z$ (hence there are no rotations left that move $m_i$) or   $d_U(M) > d_W(M)$ (further rotations will only increase the regret-equality score), then we stop iterating for $m_i$. In this case we restart this process for the next man, or return $M_{opt}$ if we have completed this process for all men. Note that since at the end of a while loop iteration, if $r(M) = 0$ then $M_{opt}$ is returned, it is not possible for the condition $d_U(M) = d_W(M)$ to ever be satisfied in the while loop clause.

\begin{algorithm} [t!]
  \caption{REDI-COL$(I, M, Q, M_{opt})$, subroutine for Algorithm \ref{alg_sm_reg_eq}. Column operation for the current column $d_U(M)$. Returns $M_{opt}$, the best stable matching found so far (according to the regret-equality score).}
	\begin{algorithmic}[1]
          \Require An instance $I$ of \acrshort{smi}, stable matching $M$, the closure of $M$, $Q$ and $M_{opt}$ the best stable matching found so far (according to the regret-equality score).
          \Ensure Finds the best stable matching (according to the regret-equality score) found when incrementally eliminating women of worst rank from the current matching, without increasing the men's degree. If an improvement is made then $M_{opt}$ is updated. $M_{opt}$ is returned. All variables used within Algorithm \ref{alg_sm_reg_eq_sub} are understood to be local.
          
          \State $a \gets d_U(M)$
		\While {\textbf{true}} \label{alg_sm_reg_eq_sub_while}
		\If {$r(M) < r(M_{opt})$} 
          \State $M_{opt} \gets M$ \label{alg_sm_reg_eq_sub_setToMopt}
          \EndIf
          \If {$d_U(M) \geq d_W(M)$} \Comment Further rotations for this column would only increase the difference in degree of men and women.
          \State \Return $M_{opt}$
          \EndIf

			\State $b \gets d_W(M)$
			\State $Q' \gets c(R_b(M)) \backslash Q$ 
          	\If {$d_U(M/Q')>a \lor d_W(M/Q')=b$}
          	\State \Return $M_{opt}$
          	\Else
          \State $M \gets M / Q'$ \Comment Rotations in $Q'$ are eliminated in order defined by the rotation poset of $I$.\label{alg_sm_reg_eq_sub_elimination_2}
          	\State $Q \gets Q \cup Q'$
			          \EndIf
          \EndWhile
	\end{algorithmic}
	\label{alg_sm_reg_eq_sub}
\end{algorithm}


\subsection{Correctness proof and time complexity}
In this section we state the correctness and time complexity results for Algorithm REDI. The proofs of these theorems may be found in Appendix \ref{app_redi_corr_proof}.

\begin{restatable*}{theorem}{smreappalgredithm}
	Let $I$ be an instance of \acrshort{smi}. Any matching produced by Algorithm REDI is a regret-equal stable matching of $I$. 
	\label{sm_re_app_alg_redi_thm}
\end{restatable*}

\begin{restatable*}{theorem}{smreappalgreditime}
	Let $I$ be an instance of \acrshort{smi}. Algorithm REDI always terminates within $O(d_0nm)$ time, where $d_0=|d_U(M_0) - d_W(M_0)|$, $n$ is the number of men or women in $I$, $m$ is the total length of all preference lists and $M_0$ is the man-optimal stable matching.
	\label{sm_re_app_alg_redi_time}
\end{restatable*}

\subsection{Regret-equal stable matchings with minimum cost}

We may seek a regret-equal stable matching with minimum cost over all regret-equal stable matchings. This may be achieved in $O(nm^{2.5})$ time using the following process. 

We define the \emph{deletion} of pair $(m_i, w_j)$ as the removal of $w_j$ from $m_i$'s preference list and the removal of $m_i$ from $w_j$'s preference list. \emph{Truncating men's preference lists at $t$}, where $1 \leq t \leq n$, is then the process of deleting  pair $(m_i, w_j)$ for each $(m_i, w_j)$ such that $\text{rank}(m_i, w_j) > t$. An analogous definition holds for women. For a given SMI instance $I$, first find the regret-equality score $r$ of the regret-equal stable matching using Algorithm REDI in $O(d_0nm)$ time. Then, iterate over all possible man-woman degree pairs $(a, b)$ such that $|a-b|=r$ (there are $O(n)$ such pairs). For each such degree pair $(a,b)$, truncate men at $a$ and women at $b$, creating instance $I'$. Then, for each of the $O(m)$ man-woman pairs $(m_i, w_j)$ in $I'$, fix $m_i$ with his $a$th-choice partner and $w_j$ with her $b$th-choice partner (where ranks are taken with respect to instance $I$), if possible. If this is not possible then continue to the next degree pair. Assume that $w_j'$ is $m_i$'s $a$th-choice partner, and $m_i'$ is $w_j$'s $b$th-choice partner. In $I'$, we now delete pairs $(m_i'',w_j'')$ for any $w_j''$ such that $m_i$ prefers $w_j''$ to $w_j'$ and $w_j''$ prefers $m_i$ to $m_i''$.  Also delete the pair  $(m_i'',w_j'')$ for any $m_i''$ such that $w_j$ prefers $m_i''$ to $m_i'$ and $m_i''$ prefers $w_j$ to $w_j''$. Next we delete all remaining preference list elements of $m_i$ except $w_j'$ and all remaining preference list elements of $w_j$ except $m_i'$. The Gale-Shapley Algorithm is run to check that a stable matching of size $n$ exists in $I'$. If no such stable matching exists then we move on to the next degree pair. Feder's Algorithm may then be used to find an egalitarian stable matching in the reduced \acrshort{smi} instance $I'$ in $O(m^{1.5})$ time (using the original ranks in $I$ as costs). This makes a total of $O(nm^{2.5})$ time to find a regret-equal stable matching with minimum cost.

\section{Algorithm to find a min-regret sum stable matching in SMI}
\label{sm_mrs_sec}

Algorithm MRS, which finds a min-regret sum stable matching, given an instance of \acrshort{smi}, is presented as Algorithm \ref{sm_mrs_alg}. First, the man-optimal and woman-optimal stable matchings, $M_0$ and $M_z$, are found using the Man-oriented and Women-oriented Gale-Shapley Algorithms \cite{GS62}. The best matching found so far, denoted $M_{opt}$ is initialised to $M_0$. We then iterate over each possible man degree $a$ between $d_U(M_0)$ and $d_U(M_z)$ inclusive, where an improvement of $M_{opt}$, according to the regret sum, is still possible. As an example, suppose $M_{opt}$ has a regret sum of $5$ with $d_U(M_{opt}) = 2$ and $d_W(M_{opt}) = 3$. Then, it is not worth iterating over any man degree greater than $3$ since it will not be possible to improve on the regret sum of $5$ by doing so. For each iteration of the while loop, we truncate the men's preference lists at $a$, and find the woman-optimal stable matching $M_z^T$ for this truncated instance. If the regret sum of $M_z^T$ is smaller than that of $M_{opt}$, we update $M_{opt}$ to $M_z^T$. After all iterations over possible men's degrees are completed, $M_{opt}$ is returned. 

\begin{algorithm} [ht]
  \caption{MRS$(I)$, returns a min-regret sum stable matching for an instance $I$ of \acrshort{smi}}
	\begin{algorithmic}[1]
	\Require An instance $I$ of \acrshort{smi}.
	\Ensure Return a min-regret sum stable matching $M_{opt}$.
	\State $M_0 \gets$ MGS($I$) \Comment $M_0$ is the man-optimal stable matching found using the Man-oriented Gale-Shapley Algorithm (MGS) \cite{GS62}. 
	\State $M_z \gets$ WGS($I$) \Comment $M_z$ is the woman-optimal stable matching found using the Woman-oriented Gale-Shapley Algorithm (WGS) \cite{GS62}. 
    \State $M_{opt} \gets M_0$
    
    \State $a \gets d_U(M_0)$
    \While {$a \leq d_U(M_z)$ \textbf{and} $a + 1 < d_U(M_{opt}) + d_W(M_{opt})$} \label{sm_mrs_alg_while}
    \State $I_T \gets $instance $I$ where men's preference lists are truncated at rank $a$. \label{sm_msr_alg_trunc}
    \State $M_z^T \gets$ WGS($I_T$) \label{sm_msr_alg_trunc_wopt}
    \If {$d_U(M_z^T) + d_W(M_z^T) < d_U(M_{opt}) + d_W(M_{opt})$}
    \State $M_{opt} \gets M_z^T$ \label{sm_mrs_alg_mopt_updated}
    \EndIf
    \State $a \gets a + 1$
    \EndWhile
    \State \Return $M_{opt}$ \label{sm_msr_alg_return_mopt}
	\end{algorithmic}
	\label{sm_mrs_alg}
\end{algorithm}

Let $d_s$ denote the difference between the degree of men in the woman-optimal stable matching $M_z$, and in the man-optimal stable matching $M_0$, that is $d_s = d_U(M_z) - d_U(M_0)$. Theorem \ref{sm_msr_alg_proof} as follows states that Algorithm MRS produces a min-regret sum stable matching in $O(d_s m)$ time. See Appendix \ref{app_mrs_corr_tc_proof} for the proof of this Theorem.

\begin{restatable*}{theorem}{smmsralgproof}
Let $I$ be an instance of \acrshort{smi}. Algorithm MRS produces a min-regret sum stable matching in $O(d_s m)$ time, where $d_s = d_U(M_z) - d_U(M_0)$, $m$ is the total length of all preference lists, and $M_0$ and $M_z$ are the man-optimal and woman-optimal stable matchings respectively.
	\label{sm_msr_alg_proof}
\end{restatable*}

\section{Experiments}
\label{sm_re_exps_sec}

\subsection{Methodology}

An Enumeration Algorithm (ENUM) exists to find the set of all stable matchings of an instance $I$ of \acrshort{smi} in $O(m + n|\mathcal{M}|)$ time \cite{Gus87}. Within this time complexity, it is possible to output a regret-equal stable matching from this set of stable matchings, by keeping track of the best stable matching found so far (according to the regret-equality score) as they are created. We randomly generated instances of \acrshort{sm}, in order to compare the performance of Algorithms REDI and ENUM. Using output from Algorithm ENUM, we also investigated the effect of varying instance sizes, for six different types of optimal stable matchings (balanced, sex-equal, egalitarian, min-regret, regret-equal, min-regret sum), and also output from Algorithm REDI, over a range of measures (including balanced score, sex-equal score, cost, degree, regret-equality score, regret sum). Tests were run over $19$ different instance types with varying instance size ($n \in \{10, 20, ..., 100, 200, ..., 1000\}$). All instances tested were complete with uniform distributions on preference lists. Experiments were run over $500$ instances of each instance type.

Each instance was run over the two algorithms described above with a timeout time of $1$ hour for each algorithm. No instances timed out for these experiments. Experiments were conducted on a machine running Ubuntu version $18.04$ with 32 cores, 8$\times$64GB RAM and Dual Intel\textsuperscript{\textregistered} Xeon\textsuperscript{\textregistered} CPU E5-2697A v4 processors. Instance generation, correctness and statistics summarisation programs, and plot and \LaTeX\ table generation were all written in Python and run on Python version $3.6.1$. All other code was written in Java and compiled using Java version $1.8.0$. Each instance was run on a single thread with $16$ instances run in parallel using GNU Parallel \cite{Tan11}. Serial Java garbage collection was used with a maximum heap size of $2$GB distributed to each thread. Code and data repositories for these experiments can be found at \href{https://zenodo.org/record/3630383}{zenodo.org/record/3630383} and \href{https://zenodo.org/record/3630349}{zenodo.org/record/3630349} respectively. Comprehensive correctness testing was conducted, a description of which may be seen in Appendix \ref{sm_re_exps_sec_correctness_app}.

\subsection{Experimental results summary}
Figure \ref{reg-eq-duration} shows a comparison of the time taken to execute the two algorithms over increasing values of $n$. Precise data for this plot can be seen in Table \ref{sm_re_table_duration} of Appendix \ref{sm_re_exps_sec_f_t_app} which gives the mean, median, $5$th percentile and $95$th percentile durations for Algorithms REDI and ENUM. In Figure \ref{reg-eq-duration}, the median values of time taken for each algorithm are plotted and a $90\%$ confidence interval is displayed using the $5$th and $95$th percentile measurements. Additional experiments and evaluations not discussed here may also be found in Appendix \ref{sm_re_exps_sec_additional_app}.

Figure \ref{reg-eq-opt-plots} shows comparisons of six different types of optimal stable matchings (balanced, sex-equal, egalitarian, min-regret, regret-equal, min-regret sum), and output from Algorithm REDI, over a range of measures (including balanced score, sex-equal score, cost, degree, regret-equality score, regret sum), as $n$ increases. Optimal stable matching statistics involving a measure determined by cost (respectively degree) are given a green (respectively blue) colour. For a particular fairness objective A and a particular fairness measure B, there may be a set of several stable matchings that are optimal with respect to A. In this case we choose a matching from this set that has best possible measure with respect to B. For example, if we are looking at the regret-equality score, for a particular instance, we find a sex-equal stable matching that has smallest regret-equality score (over the set of all sex-equal stable matchings) and use this value to plot the regret-equality score for this type of optimal stable matching. This process is replicated for the other types of optimal stable matching. In each case the mean measure value is plotted for the given type of optimal stable matching. Data for these plots may be found in Tables \ref{sm_re_table_balancedScoreAv}, \ref{sm_re_table_sexEqualCostAv},  \ref{sm_re_table_egalCostAv}, \ref{sm_re_table_degreeAv}, \ref{sm_re_table_regretEqualScoreAv} and \ref{sm_re_table_sumRegretAv} of Appendix \ref{sm_re_exps_sec_f_t_app}.

The main results of these experiments are:
\begin{itemize}
	\item \emph{Time taken:} It is clear from Figure \ref{reg-eq-duration} in  that Algorithm REDI is the faster algorithm in practice, taking approximately $2$s to solve an instance of size $n=1000$ with very little variation. In contrast, Algorithm ENUM takes around $8$s for an instance of size $n=1000$ with a far larger variation. 
	
	\item \emph{Sex-equal score:} A wide variation in sex-equal score over the six optimal matchings can be seen in Figure \ref{reg-eq-opt-plots_se} (and Table \ref{sm_re_table_sexEqualCostAv}). Sex-equal and balanced stable matchings are extremely closely aligned giving a mean sex-equal score of $265.0$ and $284.0$ respectively for the instance type with $n=1000$. Min-sum regret stable matchings, on the other hand, performed the least well with a mean sex-equal score of $12400.0$ for the same instance type. 
		
	\item \emph{Regret-equality score:} Similar to the previous point we see a wide variation in regret-equality score over the six optimal stable matchings in Figure \ref{reg-eq-opt-plots_re} (and Table \ref{sm_re_table_regretEqualScoreAv}). For the instance type with $n=1000$, this ranges from a mean regret-equality score of $14.2$ for the regret-equal stable matching to $84.6$ for the minimum regret stable matching. It is interesting to note that the type of optimal stable matching (out of the six optimal stable matchings tested) whose regret-equality score tends to be furthest away from that of a regret-equal stable matching is the min-regret sum stable matching. This may be due to the fact that minimising the sum of two measures does not necessarily force the two measures to be close together.
	
	\item \emph{Output from Algorithm REDI:} Due to the wide variation of regret-equality scores among different types of optimal stable matchings (as described above) it is clear that no other optimal stable matching is able to closely approximate a regret-equal stable matching, which highlights the importance of Algorithm REDI that is designed specifically for optimising this measure. Interestingly, Algorithm REDI is also competitive in terms of balanced score, cost and degree.  Indeed, we can see from Tables \ref{sm_re_table_balancedScoreAv}, \ref{sm_re_table_egalCostAv} and \ref{sm_re_table_degreeAv}, that Algorithm REDI approximates these types of optimal stable matchings at an average of $9.0\%$, $1.1\%$ and $3.0\%$ over their respective optimal values, for instances with $n=1000$. Over all instance sizes, these values are within ranges $[4.0\%, 10.9\%]$, $[1.1\%, 3.4\%]$ and $[1.3\%, 3.7\%]$, respectively. This gives a good indication of the high-quality of output from this algorithm even on seemingly unrelated measures.
\end{itemize}

\begin{figure}
	\centering
    \includegraphics[scale=0.52]{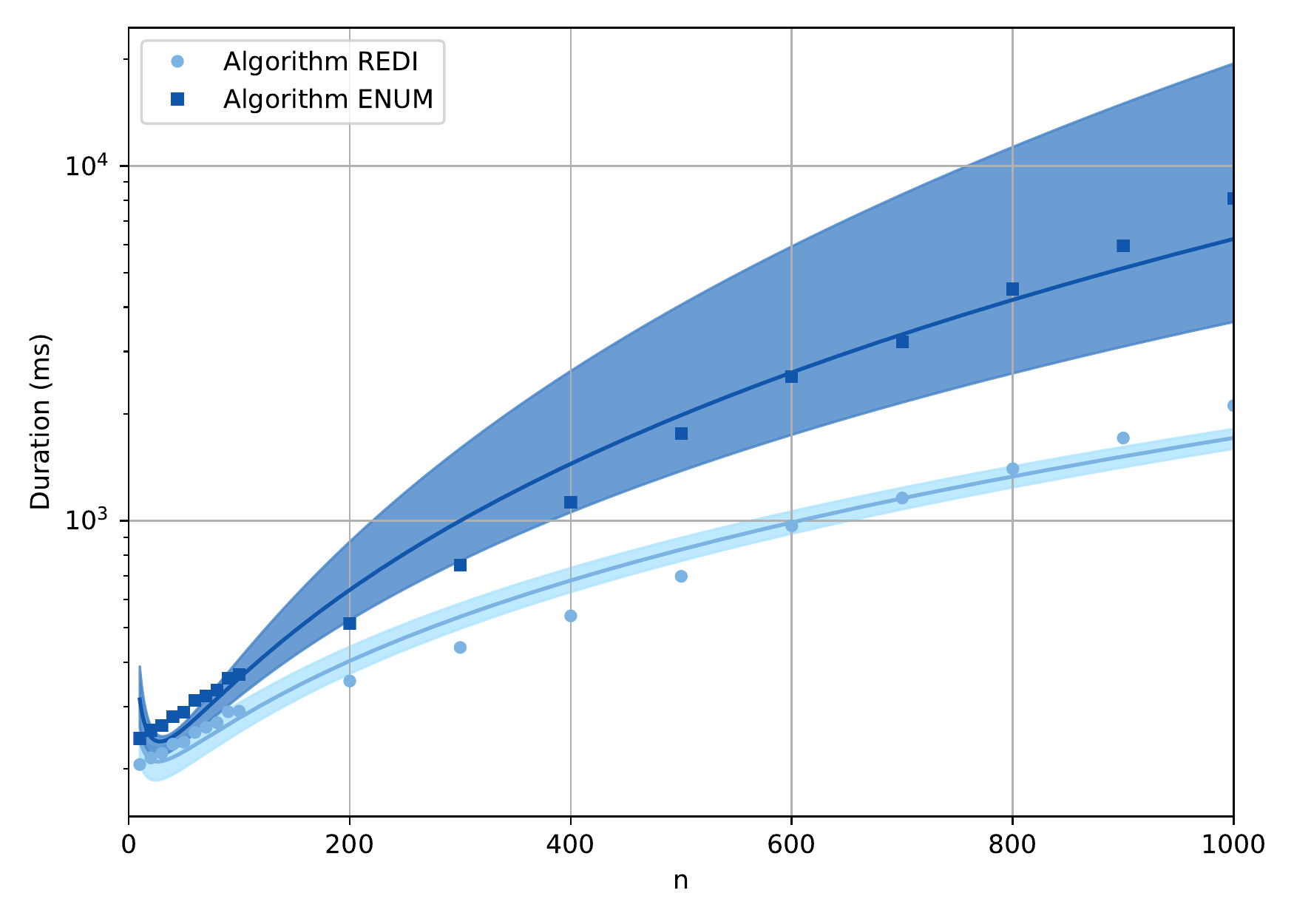}
    \caption{A $\log$ plot of the time taken to execute Algorithms REDI and ENUM. A second order polynomial model has been assumed for best-fit lines.}
    \label{reg-eq-duration}
\end{figure}

\begin{figure}[H]
\centering
\captionsetup[subfigure]{justification=centering}
  \begin{subfigure}[b]{0.48\textwidth}
  \centering
    \includegraphics[scale=0.36]{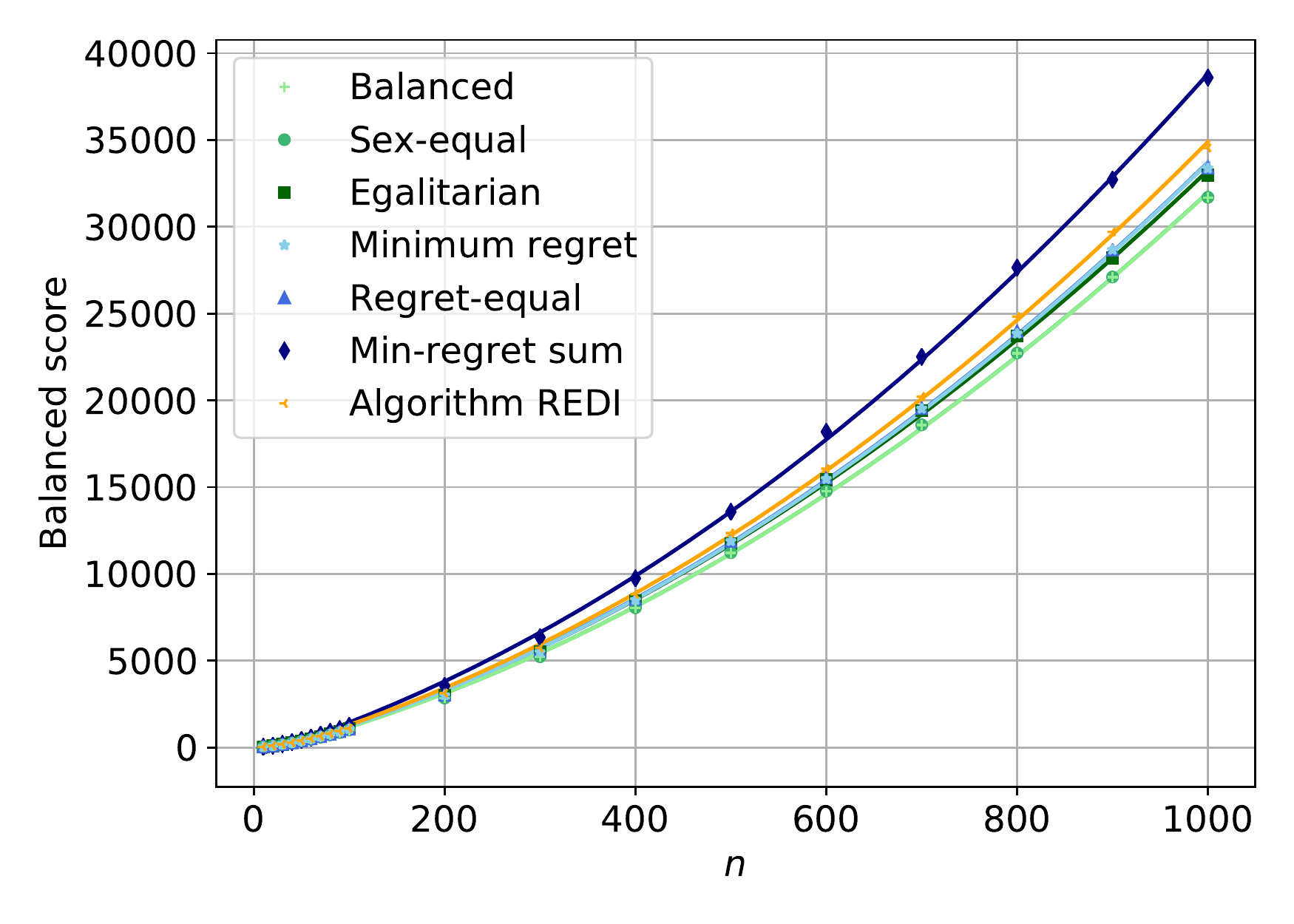}
    \caption{Plot of balanced score.}
  \end{subfigure}
  \begin{subfigure}[b]{0.48\textwidth}
  \centering
    \includegraphics[scale=0.36]{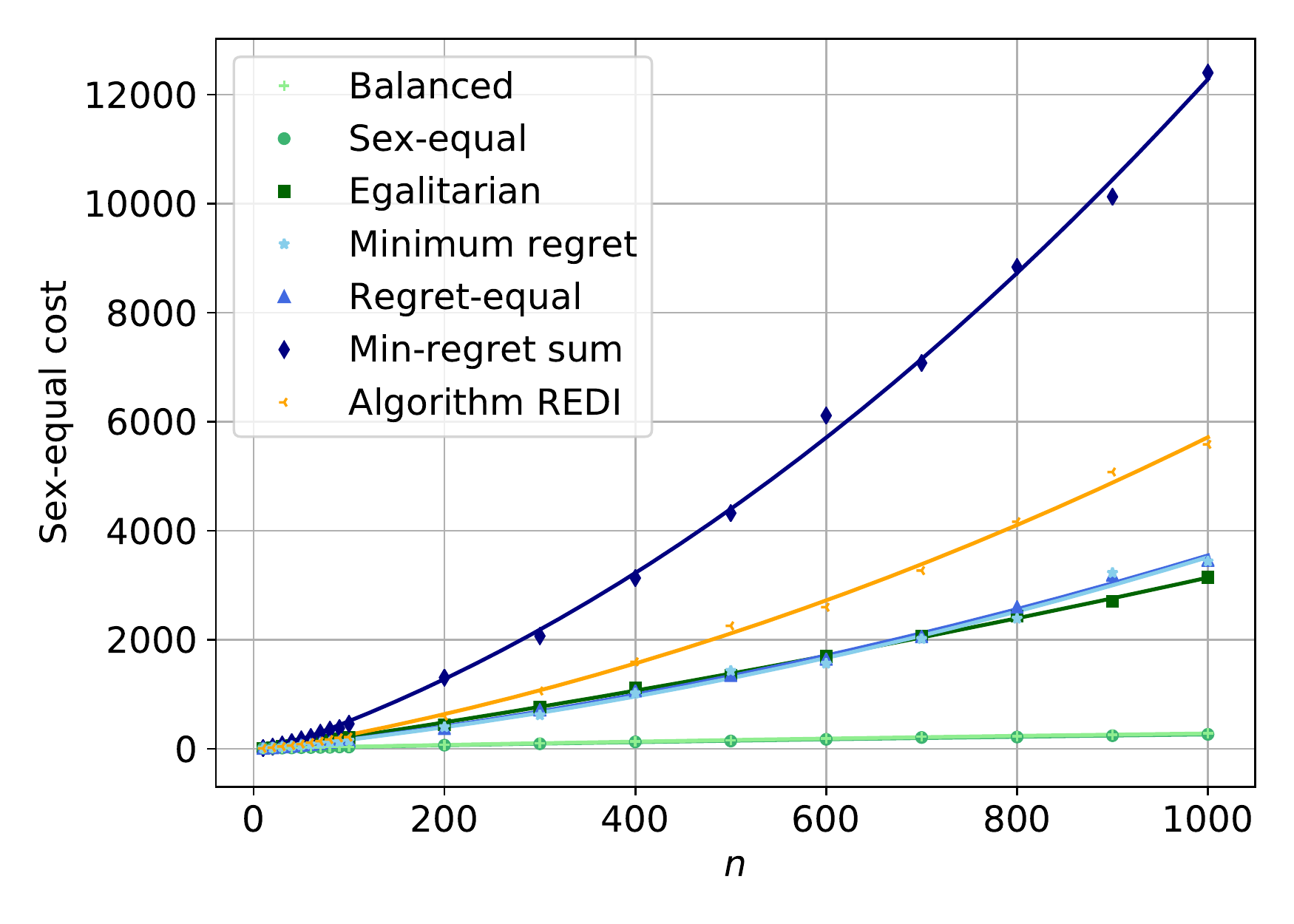}
    \caption{Plot of sex-equal score.}
    \label{reg-eq-opt-plots_se}
  \end{subfigure} 

    \begin{subfigure}[b]{0.48\textwidth}
  \centering
    \includegraphics[scale=0.36]{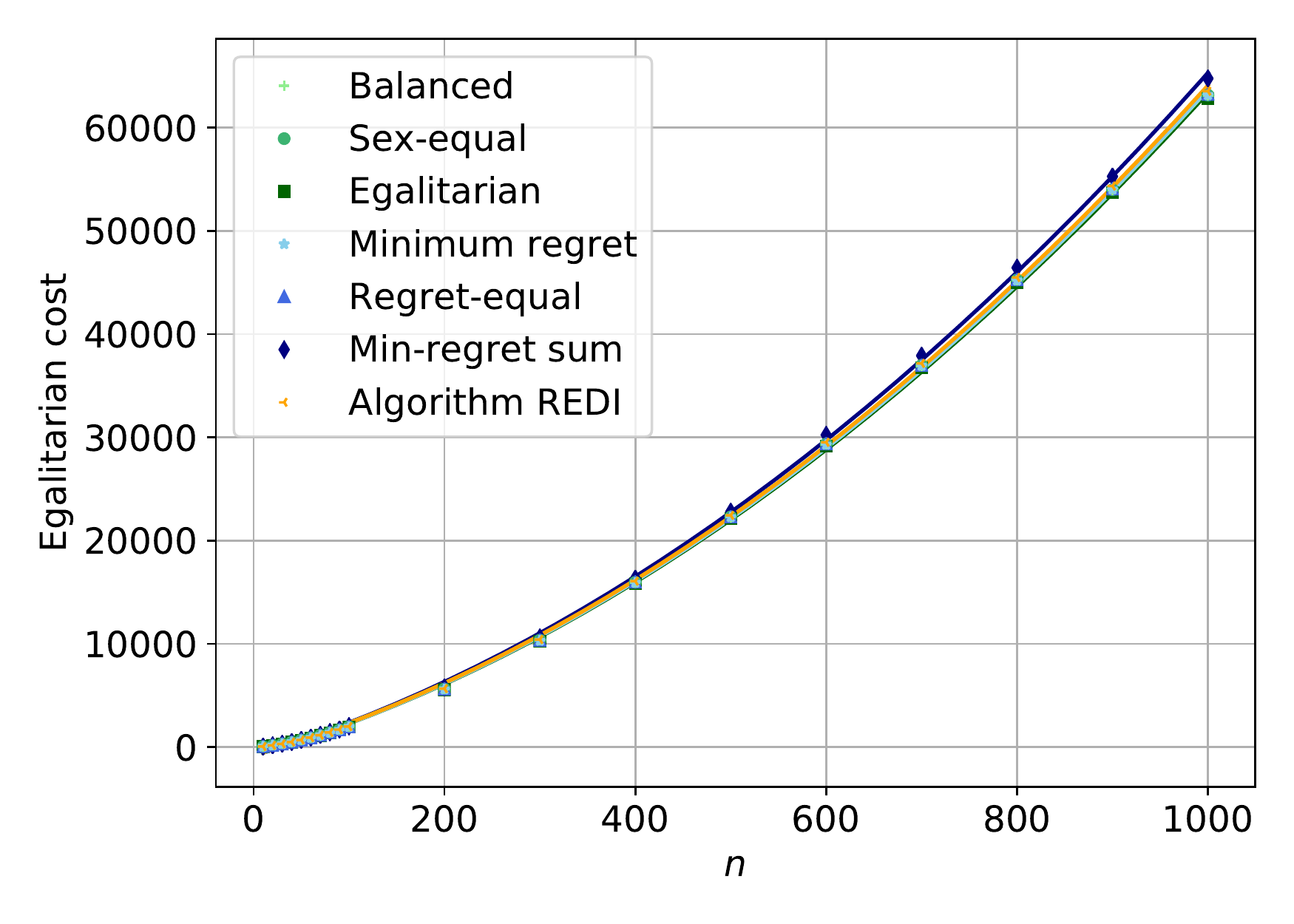}
    \caption{Plot of cost.}
  \end{subfigure} 
%
\begin{subfigure}[b]{0.48\textwidth}
  \centering
    \includegraphics[scale=0.36]{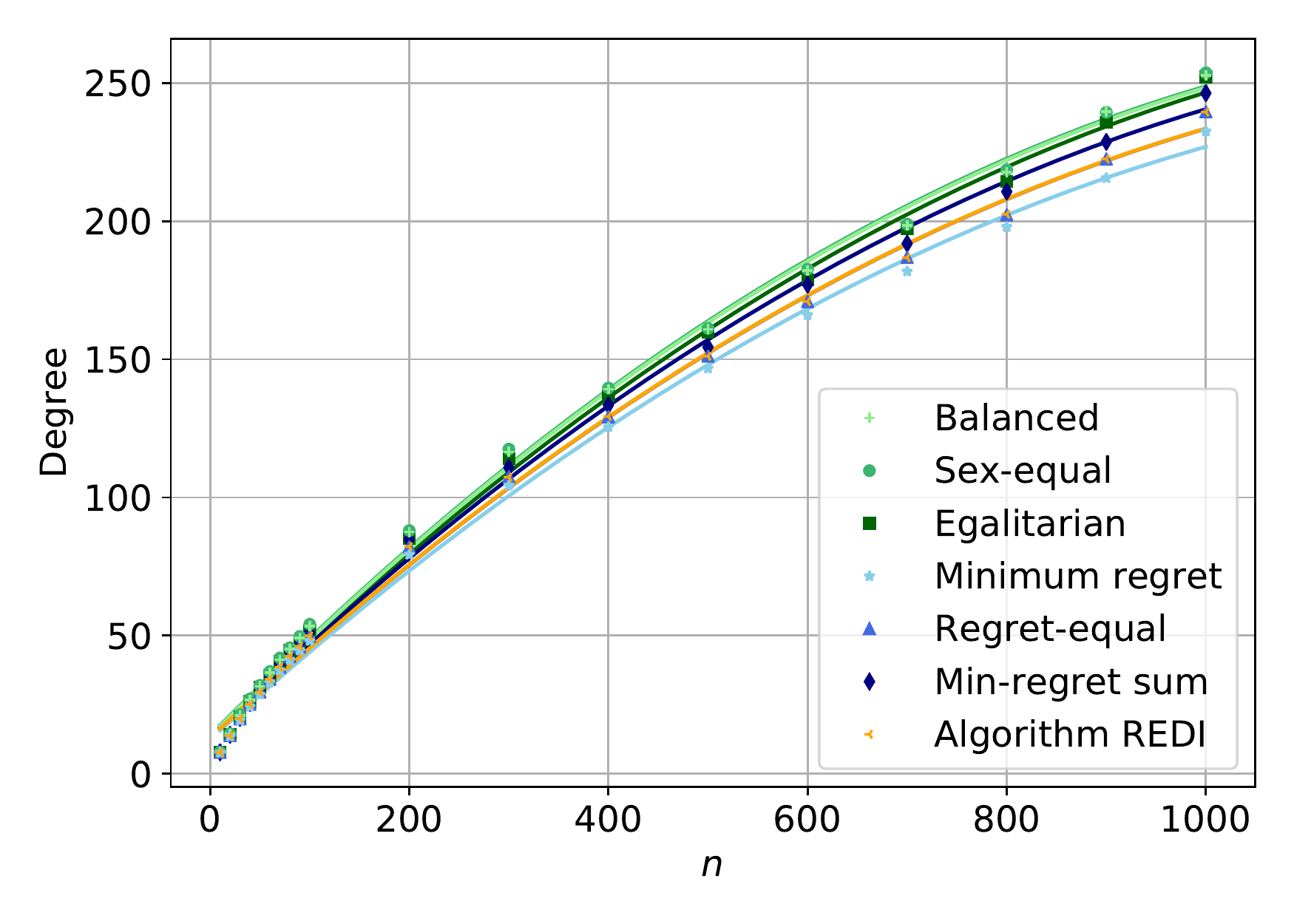}
    \caption{Plot of degree.}
  \end{subfigure} 
  
  \begin{subfigure}[b]{0.48\textwidth}
  \centering
    \includegraphics[scale=0.36]{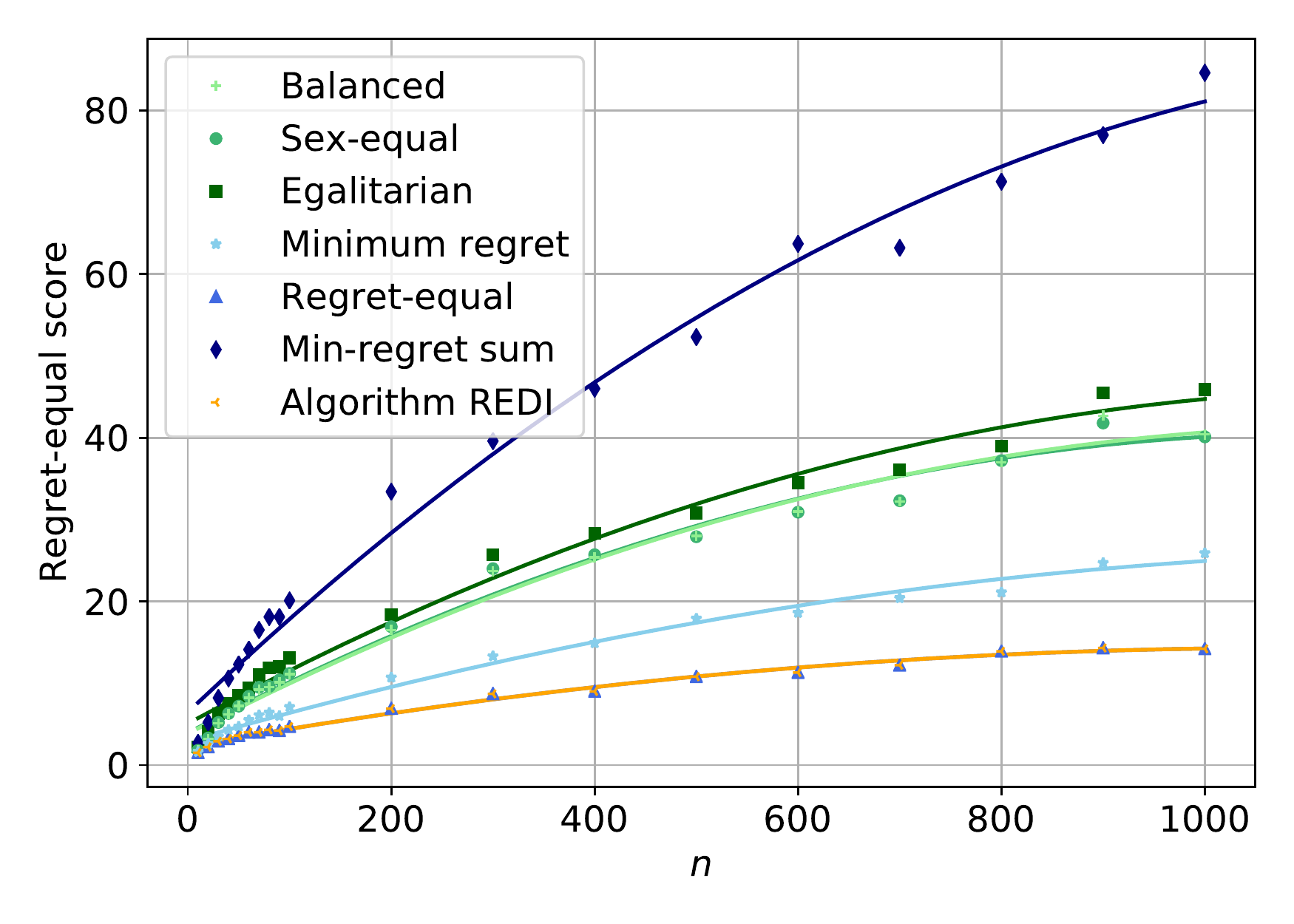}
    \caption{Plot of regret-equality score.}
    \label{reg-eq-opt-plots_re}
  \end{subfigure}
    \begin{subfigure}[b]{0.48\textwidth}
  \centering
    \includegraphics[scale=0.36]{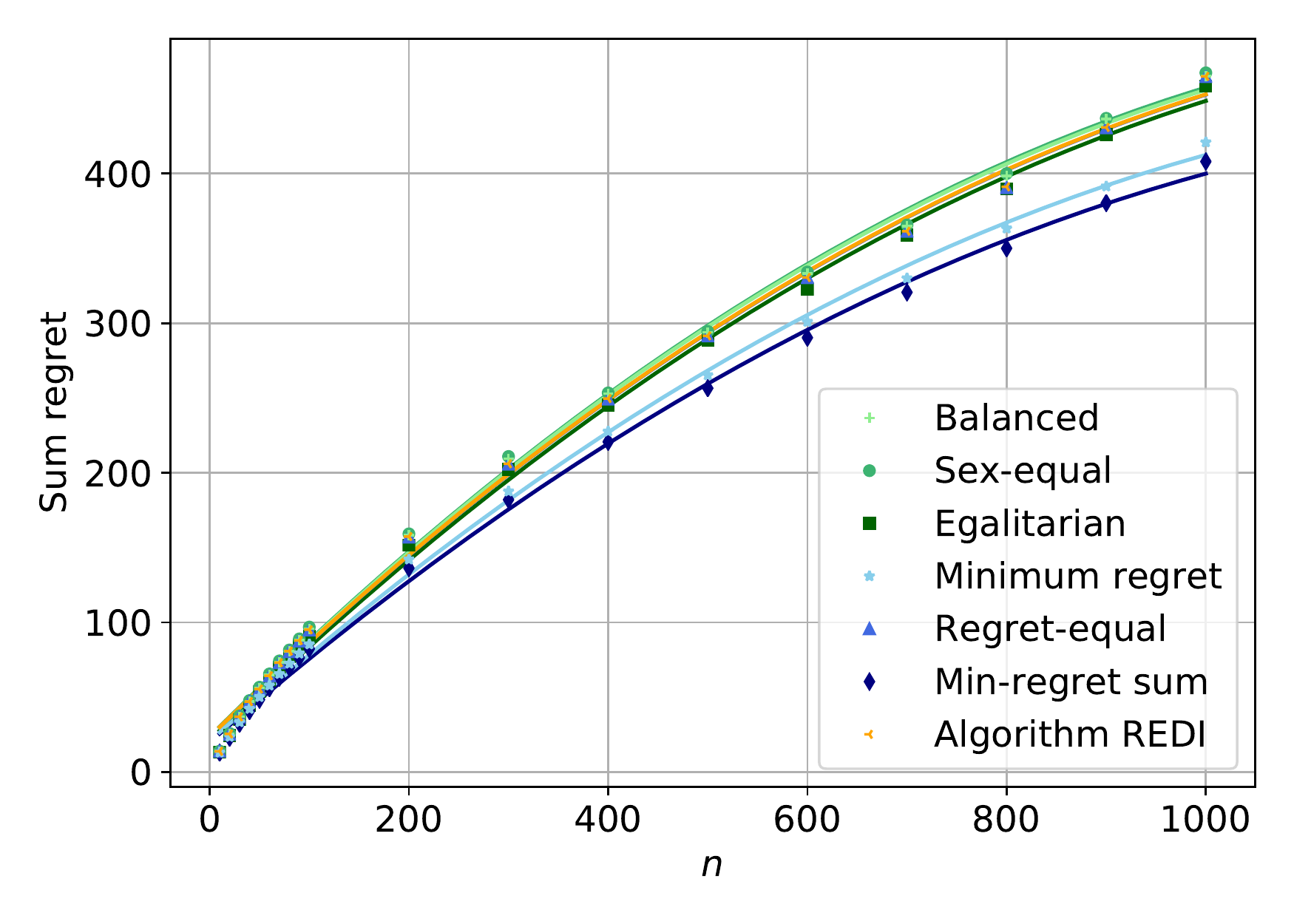}
    \caption{Plot of regret sum.}
  \end{subfigure} 

  \caption{Plots of experiments to compare six different optimal stable matchings (balanced, sex-equal, egalitarian, min-regret, regret-equal, min-regret sum), and output from Algorithm REDI, over a range of measures (including balanced score, sex-equal score, cost, degree, regret-equality score, regret sum). A second order polynomial model has been assumed for all best-fit lines.}
  \label{reg-eq-opt-plots}
\end{figure}

\section{Future work}
\label{sm_re_fw_sec}

We introduced two new notions of fair stable matchings for \acrshort{smi}, namely, the regret-equal stable matching and the min-regret sum stable matching. We presented algorithms that are able to compute matchings of these types in polynomial time: $O(d_0 nm)$ time for the regret-equal stable matching, where $d_0 = |d_U(M_0) - d_W(M_0)|$; and $O(d_s m)$ time for the min-regret sum stable matching, where $d_s = d_U(M_z) - d_U(M_0)$. It remains open as to whether these time complexities can be improved. 

\newpage
\bibliography{bibliography}

\begin{thebibliography}{10}

\bibitem{AIM03}
D.J. Abraham, R.W. Irving, and D.F. Manlove.
\newblock The {S}tudent-{P}roject {A}llocation {P}roblem.
\newblock In {\em Proceedings of ISAAC '03: the 14th Annual International
  Symposium on Algorithms and Computation}, volume 2906 of {\em Lecture Notes
  in Computer Science}, pages 474--484. Springer, 2003.

\bibitem{AIM07}
D.J. Abraham, R.W. Irving, and D.F. Manlove.
\newblock Two algorithms for the {S}tudent-{P}roject allocation problem.
\newblock {\em Journal of Discrete Algorithms}, 5(1):79--91, 2007.
\newblock Preliminary version appeared as \cite{AIM03}.

\bibitem{BKMPABCDDHHJKKNSSVV19}
P.~Bir\'o, J.~van~de Klundert, D.~Manlove, W.~Pettersson, T.~Andersson,
  L.~Burnapp, P.~Chromy, P.~Delgado, P.~Dworczak, B.~Haase, A.~Hemke,
  R.~Johnson, X.~Klimentova, D.~Kuypers, A.~Nanni Costa, B.~Smeulders,
  F.~Spieksma, M.O. Valent\'in, and A.~Viana.
\newblock Modelling and optimisation in european kidney exchange programmes.
\newblock {\em European Journal of Operational Research}, 13(4):1--10, 2019.

\bibitem{Fed90}
T.~Feder.
\newblock {\em Stable Networks and Product Graphs}.
\newblock PhD thesis, Stanford University, 1990.
\newblock Published in \emph{Memoirs of the American Mathematical Society},
  vol.\ 116, no.\ 555, 1995.

\bibitem{GS62}
D.~Gale and L.S. Shapley.
\newblock College admissions and the stability of marriage.
\newblock {\em American Mathematical Monthly}, 69:9--15, 1962.

\bibitem{GS85}
D.~Gale and M.~Sotomayor.
\newblock Some remarks on the stable matching problem.
\newblock {\em Discrete Applied Mathematics}, 11:223--232, 1985.

\bibitem{GRSZ19}
S.~Gupta, S.~Roy, S.~Saurabh, and M.~Zehavi.
\newblock Balanced stable marriage: How close is close enough?
\newblock In {\em Proceedings of WADS '19: the 16th Algorithms and Data
  Structures Symposium}, Lecture Notes in Computer Science, pages 423--437.
  Springer, 2019.

\bibitem{Gus87}
D.~Gusfield.
\newblock Three fast algorithms for four problems in stable marriage.
\newblock {\em SIAM Journal on Computing}, 16(1):111--128, 1987.

\bibitem{GI89}
D.~Gusfield and R.W. Irving.
\newblock {\em The Stable Marriage Problem: Structure and Algorithms}.
\newblock MIT Press, 1989.

\bibitem{CPLEX}
{I}{B}{M}.
\newblock {C}{P}{L}{E}{X} optimizer.
\newblock \url{https://www.ibm.com/analytics/cplex-optimizer}, 2019.

\bibitem{IL86}
R.W. Irving and P.~Leather.
\newblock The complexity of counting stable marriages.
\newblock {\em SIAM Journal on Computing}, 15(3):655--667, 1986.

\bibitem{ILG87}
R.W. Irving, P.~Leather, and D.~Gusfield.
\newblock An efficient algorithm for the ``optimal'' stable marriage.
\newblock {\em Journal of the ACM}, 34(3):532--543, 1987.

\bibitem{Kat93}
A.~Kato.
\newblock Complexity of the sex-equal stable marriage problem.
\newblock {\em Japan Journal of Industrial and Applied Mathematics}, 10:1--19,
  1993.

\bibitem{Knu76}
D.E. Knuth.
\newblock {\em Mariages Stables}.
\newblock Les Presses de L'Universit\'{e} de Montr\'{e}al, 1976.
\newblock {E}nglish translation in \emph{Stable Marriage and its Relation to
  Other Combinatorial Problems}, volume 10 of CRM Proceedings and Lecture
  Notes, American Mathematical Society, 1997.

\bibitem{Man13}
D.F. Manlove.
\newblock {\em Algorithmics of Matching Under Preferences}.
\newblock World Scientific, 2013.

\bibitem{MI14}
E.~McDermid and R.W. Irving.
\newblock Sex-equal stable matchings: Complexity and exact algorithms.
\newblock {\em Algorithmica}, 68:545--570, 2014.

\bibitem{MOD11}
S.~Mitchell, M.~O'Sullivan, and I.~Dunning.
\newblock Pu{L}{P}: A linear programming toolkit for {P}ython.
\newblock {\em Optimization Online}, 2011.

\bibitem{PR95}
E.~Peranson and R.R. Randlett.
\newblock The {N}{R}{M}{P} matching algorithm revisited: Theory versus
  practice.
\newblock {\em Academic Medicine}, 70(6):477--484, 1995.

\bibitem{Tan11}
O.~Tange.
\newblock {G}{N}{U} parallel - the command-line power tool.
\newblock {\em The USENIX Magazine}, pages 42--47, 2011.

\end{thebibliography}

\newpage
\appendix
\begin{landscape}
\section{Algorithm REDI supplement}
This appendix section presents supplementary information for Algorithm REDI.

\subsection{Regret-equal degree tuples}
\label{sm_re_degpairs_supp}
Figure \ref{re_ex_differences} shows the possible degree pairs of a regret-equal stable matching in the general case. Figure \ref{re_ex_differences_2} shows the possible degree pairs for a regret-equal stable matching where $d(M_0)=(2, 6)$ and $n \geq 9$.

\begin{figure}[h]
\centering

\begin{tabular}{l|l|l|l|l|l|l|l|l|l}
$r(M)$ & \multicolumn{7}{l}{Degree pairs $(d_U(M), d_W(M))$} \\
\hline
$k$ & $1$ & $2$ & $...$ & $b_0 - a_0$ & $b_0 - a_0 + 1$ & $b_0 - a_0 + 2$ & $...$ & $2b_0 - 2a_0 - 1$ & $2b_0 - 2a_0$ \\
\hline
$d_0=b_0 - a_0$ & $(a_0, b_0)$ & &&&&&&&\\
$b_0 - a_0 - 1$ & $(a_0, b_0 - 1)$ & $(a_0 + 1, b_0)$ &&&&&&\\
$...$ & $...$ & $...$ &&&&&&& \\
$1$ & $(a_0, a_0 + 1)$ & $(a_0 + 1, a_0 + 2)$ & $...$ & $(b_0 - 1, b_0)$ &&&&&\\
$0$ & $(a_0, a_0)$ & $(a_0 + 1, a_0 + 1)$ & $...$ & $...$ & $(b_0, b_0)$ &&&&\\
$1$ & $(a_0, a_0 - 1)$ & $(a_0 + 1, a_0)$ & $...$ & $...$ & $...$ & $(b_0 + 1, b_0)$ &&\\
$...$ & $...$ & $...$ & $...$ & $...$ & $...$ & $...$ && \\
$b_0 - a_0 - 2$ & $(a_0, a_0 - d_0 + 2)$ & $(a_0 + 1, a_0 - d_0 + 3)$ & $...$ & $...$ & $...$ & $...$ & $...$ & $(2b_0 - a_0 - 2, b_0)$\\
$b_0 - a_0 - 1$ & $(a_0, a_0 - d_0 + 1)$ & $(a_0 + 1, a_0 - d_0 + 2)$ & $...$ & $...$ & $...$ & $...$ & $...$ & $...$ & $(2b_0 - a_0 - 1, b_0)$ \\

\end{tabular}
\caption{Possible regret-equal degree tuples for men and women when $d(M_0)=(a_0, b_0)$, $n \geq 2b_0 - a_0 - 1$ and $a_0 - d_0 + 1 \geq 1$.}
\label{re_ex_differences}
\end{figure}

\end{landscape}

\begin{figure}
\centering

\begin{tabular}{l|l|l|l|l|l|l|l|l}
$r(M)$ & \multicolumn{8}{l}{Degree pairs $(d_U(M), d_W(M))$} \\
\hline
$k$ & $1$ & $2$ & $3$ & $4$ & $5$ & $6$ & $7$  & $8$\\
\hline
$d_0=4$ & $(2, 6)$ &&&&&&\\
$3$ & $(2, 5)$ & $(3, 6)$ &&&&& \\
$2$ & $(2, 4)$ & $(3, 5)$ & $(4, 6)$ &&&&\\
$1$ & $(2, 3)$ & $(3, 4)$ & $(4, 5)$ & $(5, 6)$ &&&\\
$0$ & $(2, 2)$ & $(3, 3)$ & $(4, 4)$ & $(5, 5)$ & $(6, 6)$&&&\\
$1$ & $(2, 1)$ & $(3, 2)$ & $(4, 3)$ & $(5, 4)$ & $(6, 5)$ & $(7, 6)$ &&\\
$2$ &  & $(3, 1)$ & $(4, 2)$ & $(5, 3)$ & $(6, 4)$ & $(7, 5)$ & $(8, 6)$\\
$3$ &  &  & $(4, 1)$ & $(5, 2)$ & $(6, 3)$ & $(7, 4)$ & $(8, 5)$ & $(9, 6)$\\
\end{tabular}
\caption{Possible regret-equal degree tuples for men and women when $d(M_0)=(2, 6)$ and $n \geq 9$.}
\label{re_ex_differences_2}
\end{figure}

\subsection{Correctness proof}
\label{app_redi_corr_proof}
Theorem \ref{sm_re_app_alg_redi_thm}, with accompanying Proposition \ref{sm_re_alg_degree_prop}, shows that Algorithm REDI always produces a regret-equal stable matching. Theorem \ref{sm_re_app_alg_redi_time} shows that Algorithm REDI runs in $O(d_0nm)$ time. 

\begin{proposition}
	Let $I$ be an instance of \acrshort{smi} and let $M$ and $M'$ be stable matchings in $I$ where for each man $m_i \in U$, $\text{rank}(m_i, M(m_i)) \leq \text{rank}(m_i, M'(m_i))$ and $d_U(M) = d_U(M')$. Let $Q$ and $Q'$ denote the set of rotations eliminated from $M_0$ to reach $M$ and $M'$ respectively. Then stable matching $M''$ with $d(M'') = d(M')$ may be found by ensuring all rotations in $R_d = c(\{\rho \in R : \exists (m,w) \in \rho$ where $d_W(M) \geq \text{rank}(w, m) > d_W(M')\}\backslash Q)$ are eliminated from $M$. Figure \ref{sm_re_correctness_fig} shows a summary of degree pairs for $M$, $M'$ and $M''$ in the general case.
	\label{sm_re_alg_degree_prop}
	
	\begin{figure}
	\centering
	\begin{tabular}{l}	
$(a_0 + k - 1, b_0)$\\
$(a_0 + k - 1, b_0 - 1)$\\
$(a_0 + k - 1, b_0 - 2)$\\
$...$\\
$d(M)$\\
$...$\\
$d(M') = d(M'')$\\
$...$\\
$(a_0 + k - 1, \max \{a_0 - d_0 + k, 1\})$\\
	\end{tabular}
	\caption{Degree pairs in column $k = d_U(M)$ for instance $I$, built as per the description in Section \ref{sm_re_doa}.}
	\label{sm_re_correctness_fig}
	\end{figure}
	
\end{proposition}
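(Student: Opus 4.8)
The plan is to translate everything into the language of the rotation poset and closed sets, exploiting the one-to-one correspondence between stable matchings and closed subsets of $R_p(I)$. Write $a = d_U(M) = d_U(M')$, $b = d_W(M)$ and $b' = d_W(M')$. Since $\text{rank}(m_i, M(m_i)) \leq \text{rank}(m_i, M'(m_i))$ for every man $m_i$, matching $M$ is weakly man-better than $M'$ and hence lies above $M'$ in the stable-matching lattice, so the corresponding closed sets satisfy $Q \subseteq Q'$; as eliminating rotations only improves women, this also forces $b \geq b'$. The matching $M''$ obtained by eliminating $R_d$ from $M$ corresponds to the closed set $Q \cup R_d$ (a union of two order ideals is again an order ideal, so $M''$ is well defined and stable). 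It therefore suffices to establish the two degree equalities $d_U(M'') = a$ and $d_W(M'') = b'$.

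The key step, which drives both equalities, is to show $Q \cup R_d \subseteq Q'$. Writing $S = \{\rho \in R : \exists (m,w) \in \rho,\ b \geq \text{rank}(w,m) > b'\}$, so that $R_d = c(S \backslash Q)$, and using that $Q'$ is closed, it is enough to verify $S \backslash Q \subseteq Q'$; closedness of $Q'$ then absorbs the closure. I would take $\rho \in S \backslash Q$ witnessed by a pair $(m,w) \in \rho$ with $\text{rank}(w,m) > b'$, so that $\rho = \phi(m,w)$. Because $d_W(M') = b'$, woman $w$ is matched in $M'$ to a partner she ranks at most $b' < \text{rank}(w,m)$, hence strictly prefers $M'(w)$ to $m$. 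Since $w$ can leave $m$ for a strictly better partner only by eliminating $\phi(m,w)$, that rotation must lie in $Q'$, giving $S \backslash Q \subseteq Q'$ and therefore $Q \cup R_d \subseteq Q'$.

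From this containment the man-degree equality is immediate: eliminating rotations never improves any man, so $d_U(M'') \geq d_U(M) = a$, while $Q \cup R_d \subseteq Q'$ places $M''$ weakly above $M'$ in the lattice, so every man is weakly better off in $M''$ than in $M'$ and $d_U(M'') \leq d_U(M') = a$; hence $d_U(M'') = a$. The same containment gives the woman-degree lower bound: each woman is weakly better off in the lower matching $M'$, so $\text{rank}(w, M'(w)) \leq \text{rank}(w, M''(w))$ for all $w$, and taking maxima yields $b' = d_W(M') \leq d_W(M'')$.

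What remains is the matching upper bound $d_W(M'') \leq b'$, which I expect to be the main obstacle, since it must be argued woman-by-woman while controlling the side effects of the closure. A woman $w$ with $\text{rank}(w, M(w)) \leq b'$ can only improve under further eliminations, so her rank in $M''$ stays at most $b'$. For a woman $w$ with $b' < \text{rank}(w, M(w)) \leq b$, I would induct along $w$'s chain of rotations: the rotation $\phi(M(w), w)$ carrying her off her $M$-partner lies in $S \backslash Q$ (its witnessing $w$-rank is in $(b', b]$ and it is not yet eliminated in $M$), hence in $R_d$; eliminating it moves $w$ to a strictly better partner, and if that partner still has rank exceeding $b'$ the next rotation in her chain again lies in $S \backslash Q \subseteq R_d$. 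Iterating, every rotation needed to bring $w$ to rank at most $b'$ belongs to $R_d$ and is applied in $M''$, so $\text{rank}(w, M''(w)) \leq b'$; any extra rotations drawn in by the closure only improve $w$ further, which is harmless here. Combining the bounds gives $d_W(M'') = b'$, and together with $d_U(M'') = a$ we conclude $d(M'') = d(M')$, as depicted in Figure \ref{sm_re_correctness_fig}.
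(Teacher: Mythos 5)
Your proof is correct and follows essentially the same route as the paper's: both establish $Q \subseteq Q'$ and $R_d \subseteq Q'$, set $Q'' = Q \cup R_d$, and read off $d(M'') = d(M')$ from the containments $Q \subseteq Q'' \subseteq Q'$ together with the fact that $Q''$ contains every rotation moving a woman past rank $d_W(M')$. The only difference is one of detail: you explicitly justify, via witness pairs and per-woman chains of rotations, the two steps the paper asserts tersely (why every rotation containing a woman-rank exceeding $d_W(M')$ must lie in $Q'$, and why $d_W(M'') \leq d_W(M')$).
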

\begin{proof}
	First, since $\text{rank}(m_i, M(m_i)) \leq \text{rank}(m_i, M'(m_i))$ and each rotation must make some man worse off, we know that each rotation in $Q$ must be eliminated to reach $M'$ and therefore $Q \subseteq Q'$. Second, we also know that any rotation containing a woman at rank larger than $d_W(M')$ must be eliminated from $M$ in order to reach $M'$ (additional rotations may also have been eliminated). But these are precisely the rotations in $R_d$, hence $R_d \subseteq Q'$. Let $M''$ be the stable matching found when eliminating $R_d$ on $M$ and let $Q'' = Q \cup R_d$ denote the unique set of rotations corresponding to $M''$. Then $Q'' \subseteq Q'$ since $Q \subseteq Q'$ and $R_d \subseteq Q'$.
	
	We observe the following.
	\begin{itemize}
		\item Since $Q \subseteq Q''$, it must be that $d_U(M) \leq d_U(M'')$, and as $d_U(M) = d_U(M')$, it follows that $d_U(M') \leq d_U(M'')$;
		\item $Q \cup R_d$ is the set of all rotations with pairs containing women of rank larger than $d_W(M')$. Since $Q'' = Q \cup R_d$ and all rotations in $Q''$ are eliminated on $M_0$ to reach $M''$, it must be the case that $d_W(M'') \leq d_W(M')$; 
		\item Since $Q'' \subseteq Q'$, it must be that $d_U(M'') \leq d_U(M')$ and $d_W(M'') \geq d_W(M')$.
	\end{itemize}
	
	 Hence $d(M'') = d(M')$ as required.
\end{proof}

\smreappalgredithm
\begin{proof}
There are four points in Algorithm \ref{alg_sm_reg_eq}'s execution where we return a matching, namely Lines \ref{alg_sm_reg_eq_M0_return}, \ref{alg_sm_reg_eq_first_return}, \ref{alg_sm_reg_eq_second_return} and \ref{alg_sm_reg_eq_third_return}. First we show that if $M$ is a matching returned at any of these points then $M$ is stable. Next we look at each of these points where a matching may be returned and show they are regret-equal stable matchings.

Let $M$ be the matching returned at any of the four points above. Then $M$ is stable since it is found by iteratively eliminating sets of rotations that form closed subsets of the rotation poset of $I$ (on Line \ref{alg_sm_reg_eq_elimination_1} of Algorithm \ref{alg_sm_reg_eq} and Line \ref{alg_sm_reg_eq_sub_elimination_2} of Algorithm \ref{alg_sm_reg_eq_sub}) starting from the man-optimal stable matching (created on Line \ref{alg_sm_reg_eq_man-opt}). Since there is a $1$-$1$ correspondence between closed subsets of the rotation poset and the set of all stable matchings \cite[Theorem 3.1]{ILG87}, $M$ is a stable matching.

Let $M_1$ be a matching that is returned on Line \ref{alg_sm_reg_eq_M0_return}. Since $M_1$ has been returned on Line \ref{alg_sm_reg_eq_M0_return} it must be that $d_U(M_1) \geq d_W(M_1)$ and therefore by the same reasoning given in the second paragraph of Section \ref{sm_re_doa}, $M_1$ is a regret-equal stable matching.

Let $M_2$ be a matching that is returned by either Line \ref{alg_sm_reg_eq_first_return} or Line \ref{alg_sm_reg_eq_second_return}. To be returned at these points $r(M_2) = 0$ and therefore $M_2$ is a regret-equal stable matching.

Let $M_3$ be a matching that is returned on Line \ref{alg_sm_reg_eq_third_return} and let $M'$ be a regret-equal stable matching such that $d_U(M')$ is minimum over all regret-equal stable matchings. We will prove that $r(M_3)=r(M')$ by showing it will not have been possible for us to miss a stable matching with regret-equality score equal to $r(M')$ during the algorithm's execution. First we show that the column operation (Algorithm \ref{alg_sm_reg_eq_sub}) must be executed for column $d_U(M')$. Then, we show that during this column operation $M_{opt}$ will be updated such that $r(M_{opt}) = r(M')$.

 Since $M'$ is not returned on Line \ref{alg_sm_reg_eq_M0_return}, $M' \neq M_0$. If $d_U(M') = d_U(M_0)$ then clearly the column operation is executed on Line \ref{alg_sm_reg_eq_subcall_1} for column $d_U(M')$. Assume therefore that $d_U(M') \neq d_U(M_0)$. Without loss of generality let $\{m_1,m_2...,m_k\}$ be the set of men who are at rank $d_U(M')$ in $M'$. We enter the while loop on Line \ref{alg_sm_reg_eq_while} for each man $m_j \in \{m_1,m_2...,m_k\}$. $M$ is initialised to $M_0$. Successively, the algorithm eliminates all rotations in $c(\phi(m_j, M(m_j)))$ that are not yet eliminated until $M(m_j) = M'(m_j)$. This must be possible since $(m_j, M'(m_j)) \in M'$ and $M'$ is a stable matching in $I$. During the while loop iteration that rotates $m_j$ down to their $M'(m_j)$ partner, it is not necessarily the case that the current matching $M$ updates its man-degree to $d_U(M')$ at this point. This is because although $d_U(M) = d_U(M')$, an earlier movement of $m_j$ in a previous while loop iteration may have brought some other man in $\{m_1,m_2...,m_k\}$ down to their partner in $M'$ already. Note that it is not possible for a man to overshoot column $d_U(M')$ when moving $m_j$ down to their $M'(m_j)$ partner since the set of rotations we have eliminated to bring $m_i$ down to $M'(m_i)$ is a subset of the rotations corresponding to $M'$. However, for at least one of these men $m_i\in \{m_1,m_2...,m_k\}$, the while loop iteration that moves $m_i$ down to $M'(m_i)$ will also lead to $d_U(M)$ increasing to the same value as $d_U(M')$. Assume we are beginning the while loop iteration on Line \ref{alg_sm_reg_eq_while} for man $m_i$. Let $M = M_0$. We continue eliminating rotations that have not yet been eliminated in $c(\phi(m_i, M(m_i)))$ until $M(m_i) = M'(m_i)$. Our choice of $m_i$ ensures that the movement of $m_i$ to their $M'(m_i)$ partner occurs at the same time as $d_U(M)$ increases to $d_U(M')$ and so we satisfy the conditions on Line \ref{sm_re_alg_condition_for_colop} to perform the column operation (Algorithm \ref{alg_sm_reg_eq_sub}) on $M$ for column $d_U(M')$. 
 
From above we know that the column operation (Algorithm \ref{alg_sm_reg_eq_sub}) will be executed for column $d_U(M')$. Either we start this column operation with $M_0$, or we have only eliminated the minimum number of rotations necessary to take $m_i$ down to $M'(m_i)$ from $M_0$. In either case we know that $\text{rank}(m_l, M(m_l)) \leq \text{rank}(m_l, M'(m_l))$ for all $m_l \in U$.
For this column a regret-equal stable matching may be found with degree pairs that are either $(d_U(M'), d_U(M') + r(M'))$ or $(d_U(M'), d_U(M') - r(M'))$. The degree pair of matching $M'$ is given by one of the above pairs, but it may be possible for regret-equal stable matchings to exist in $I$ with both of the above degree pairs. 
Assume firstly that $d(M') = (d_U(M'), d_U(M') + r(M'))$. 
The algorithm will attempt to successively eliminate from $M$ rotations containing women of rank $d_W(M)$. By Proposition \ref{sm_re_alg_degree_prop}, since $\text{rank}(m_i, M(m_i)) \leq \text{rank}(m_i, M'(m_i))$ for all $m_i \in U$ and $d_U(M) = d_U(M')$, we know the algorithm will continue this process until $M$ is updated to a regret-equal stable matching with $d(M)=d(M')$ and so $M_{opt}$ will be set to $M$ with $r(M_{opt})=r(M')$.
Assume then that $d(M') = (d_U(M'), d_U(M') - r(M'))$, where a regret-equal stable matching may or may not exist with degree pair $(d_U(M'), d_U(M') + r(M'))$.
Then similar to before, the algorithm successively eliminates from $M$ rotations containing women of rank $d_W(M)$. This may result in a regret-equal stable matching $M$ being found with $d(M) = (d_U(M'), d_U(M') + r(M'))$ in which case $M_{opt}$ will be set to $M$ with $r(M_{opt})=r(M')$. Assume this is not the case. Then, by Proposition \ref{sm_re_alg_degree_prop}, since $\text{rank}(m_i, M(m_i)) \leq \text{rank}(m_i, M'(m_i))$ for all $m_i \in U$ and $d_U(M) = d_U(M')$, we know the algorithm will continue eliminating rotations containing women of rank $d_W(M)$ until $M$ is updated to a regret-equal stable matching with $d(M)=d(M')$ and so $M_{opt}$ will be set to $M$ with $r(M_{opt})=r(M')$ as above.

Therefore any matching returned by Algorithm REDI is a regret-equal equal stable matching.
\end{proof}

\smreappalgreditime
\begin{proof}
The for loop on Line \ref{alg_sm_reg_eq_for} of Algorithm \ref{alg_sm_reg_eq} iterates over all men, $n$ times, where $n$ is the number of men or women. During the nested while loop on Line \ref{alg_sm_reg_eq_while}, each man $m_i$ may be rotated down his preference list on Line \ref{alg_sm_reg_eq_elimination_1}, $2d_0 - 1$ times (the maximum possible number of columns from Figure \ref{re_ex_differences} minus $1$). Rotations are eliminated on Line \ref{alg_sm_reg_eq_elimination_1} of Algorithm \ref{alg_sm_reg_eq} and Line \ref{alg_sm_reg_eq_sub_elimination_2} of Algorithm \ref{alg_sm_reg_eq_sub} successively, beginning at the man-optimal stable matching $M_0$, meaning $O(m)$ rotations are eliminated in total for each while loop iteration at Line \ref{alg_sm_reg_eq_while} of Algorithm \ref{alg_sm_reg_eq}. This may also be viewed as $O(m)$ man-woman pair changes since the number of possible man-woman pairs in $I$ is $O(m)$ and each pair existing in the set of rotations is unique. Therefore Algorithm REDI runs in $O(d_0nm)$ time and since preference lists of men and women are finite, the algorithm terminates.
\end{proof}

\section{Algorithm MRS correctness and time complexity proof}
\label{app_mrs_corr_tc_proof}
Theorem \ref{sm_msr_alg_proof} provides the correctness proof and time complexity analysis for Algorithm MRS. In Lemma \ref{sm_re_sp_setstable} and Theorem \ref{sm_msr_alg_proof} we will use the following notation and terminology. Let $I$ be an instance of \acrshort{smi}. Let $I_T$ be the truncated instance of $I$, created on Line \ref{sm_msr_alg_trunc} of Algorithm \ref{sm_mrs_alg}, where men are truncated below rank $a$. Let $\mathcal{M}_T$ be the set of stable matchings in $I_T$. Finally, let 
$\text{reduced}(\mathcal{M}) = \{M \in \mathcal{M} : \forall (m_i, w_j) \in M, \text{rank}(m_i, w_j) \leq a\}$.

\begin{lemma}
\label{sm_re_sp_setstable}
	$\mathcal{M}_T = \text{reduced}(\mathcal{M})$. 
\end{lemma}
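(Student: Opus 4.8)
The plan is to prove the set equality by establishing the two inclusions $\text{reduced}(\mathcal{M}) \subseteq \mathcal{M}_T$ and $\mathcal{M}_T \subseteq \text{reduced}(\mathcal{M})$ separately. Throughout I would use the fact that, since $a$ ranges over $[d_U(M_0), d_U(M_z)]$ in the while loop of Algorithm MRS, we always have $a \geq d_U(M_0)$; this will turn out to be essential.

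For the inclusion $\text{reduced}(\mathcal{M}) \subseteq \mathcal{M}_T$, I would take any $M \in \text{reduced}(\mathcal{M})$. Every pair $(m_i, w_j) \in M$ satisfies $\text{rank}(m_i, w_j) \leq a$, so no pair of $M$ is removed by the truncation and $M$ is a matching in $I_T$. To see that $M$ is stable in $I_T$, I would argue contrapositively: any pair blocking $M$ in $I_T$ must be acceptable in $I_T$, hence acceptable in $I$, and since the preference orders (and the partners assigned by $M$) are unchanged under truncation, the three blocking conditions carry over verbatim to $I$. This would contradict the stability of $M$ in $I$, so $M \in \mathcal{M}_T$. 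Applying this argument to $M = M_0$ (whose man-ranks are all at most $d_U(M_0) \leq a$, so that $M_0 \in \text{reduced}(\mathcal{M})$) gives $M_0 \in \mathcal{M}_T$, a fact I would record for later use.

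For the reverse inclusion $\mathcal{M}_T \subseteq \text{reduced}(\mathcal{M})$, I would take $M \in \mathcal{M}_T$. Each pair of $M$ is acceptable in $I_T$, so $\text{rank}(m_i, M(m_i)) \leq a$ for every matched man, which is exactly the defining condition of $\text{reduced}(\cdot)$; it then remains to show that $M$ is stable in $I$. Suppose $(m_i, w_j)$ blocks $M$ in $I$. If $(m_i, w_j)$ is acceptable in $I_T$, then, exactly as in the first inclusion, it would also block $M$ in $I_T$, contradicting stability there. Otherwise $(m_i, w_j)$ was removed by the truncation, so $\text{rank}(m_i, w_j) > a \geq \text{rank}(m_i, M(m_i))$, meaning $m_i$ strictly prefers his partner $M(m_i)$ to $w_j$ and the second blocking condition fails --- provided $m_i$ is actually matched in $M$.

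The main obstacle is precisely this last proviso: a priori a stable matching of the truncated instance $I_T$ could leave some man $m_i$ unmatched, and then the ``truncated'' blocking pair could survive into $I$. I would dispose of this using the Rural Hospitals Theorem applied to $I_T$: since $M_0 \in \mathcal{M}_T$ (established above) and, by the standing preprocessing assumption, $M_0$ matches every man and woman of $I$, the Rural Hospitals Theorem forces every stable matching of $I_T$ to match exactly the same agents, namely all of them. Thus $m_i$ is matched in $M$, the second blocking condition fails, and no blocking pair of either type exists; hence $M$ is stable in $I$ and lies in $\text{reduced}(\mathcal{M})$. I would also note that the hypothesis $a \geq d_U(M_0)$ cannot be dropped: for $a < d_U(M_0)$ the set $\text{reduced}(\mathcal{M})$ is empty (every stable matching of $I$ has man-degree at least $d_U(M_0)$) while $\mathcal{M}_T$ need not be, so the equality would fail --- which is why it matters that the lemma is invoked only within the loop of Algorithm MRS.
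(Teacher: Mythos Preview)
Your proposal is correct and follows essentially the same two-inclusion strategy as the paper. The one noteworthy difference is in the direction $\mathcal{M}_T \subseteq \text{reduced}(\mathcal{M})$: the paper simply writes ``Let stable matching $M'$ of size $n$ exist in $\mathcal{M}_T$'' and proceeds, implicitly taking for granted that every stable matching of $I_T$ is complete, whereas you explicitly justify this by first placing $M_0$ inside $\mathcal{M}_T$ (using $a \geq d_U(M_0)$) and then invoking the Rural Hospitals Theorem in $I_T$. Your treatment is therefore slightly more careful on precisely the point you flagged as the ``main obstacle''; otherwise the arguments coincide.
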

\begin{proof}
Let stable matching $M'$ of size $n$ exist in $\mathcal{M}_T$. If we transform $I_T$ to $I$ then we are adding preference list pairs $(m_i, w_j)$ where $\text{rank}(m_i, w_j) > \text{rank}(m_i, M'(m_i))$. But then, $(m_i, w_j)$ cannot constitute a blocking pair and so $M'$ must be stable in $I$ with $M' \in \mathcal{M}$. Also, since $M'$ is in $I_T$ it must be the case that $\text{rank}(m_i, w_j) \leq a$. Hence $\mathcal{M}_T \subseteq \text{reduced}(\mathcal{M})$.

Let stable matching $M''$ exist in $\text{reduced}(\mathcal{M})$. By the definition of $\text{reduced}(\mathcal{M})$ and $I_T$ all pairs in $M''$ must exist in preference lists of the truncated instance $I_T$. If we transform $I$ to $I_T$ we will only be removing some pairs from preference lists that do not exist in $M''$. Therefore since we are only removing pairs, it is not possible to introduce pairs into $I_T$ that would block $M''$ and so $M''$ is stable in $I_T$. Hence $\mathcal{M}_T \supseteq \text{reduced}(\mathcal{M})$.

Therefore $\mathcal{M}_T = \text{reduced}(\mathcal{M})$, as required.
\end{proof}

\smmsralgproof
\begin{proof}
Let $M$ be a min-regret sum stable matching in $I$ with $d_U(M)$ minimum among all min-regret sum stable matchings. We show that it is not possible to miss a stable matching $M'$ with $d(M') = d(M)$, during the execution of Algorithm MRS. On Line \ref{sm_mrs_alg_while} of Algorithm \ref{sm_mrs_alg}, we iterate over all possible men's degrees that may correspond to a min-regret sum stable matching, and will enter the while loop for degree value $a = d_U(M)$. Since $a = d_U(M)$, we know that $M \in \text{reduced}(\mathcal{M})$ and so by Lemma \ref{sm_re_sp_setstable}, $M \in \mathcal{M}_T$. We also know by Lemma \ref{sm_re_sp_setstable}, that $M_z^T \in \mathcal{M}$. Let $R'$ and $R_z^T$ be the set of rotations associated with $M$ and $M_z^T$ in $I$. On Line \ref{sm_msr_alg_trunc_wopt} of the algorithm we find the woman-optimal stable matching $M_z^T$ in $\mathcal{M}_T$ and so we must have $\text{rank}(w_j, M_z^T(w_j)) \leq \text{rank}(w_j, M(w_j))$, for all women $w_j$, since $M \in \mathcal{M}_T$. From this inequality, we know $R' \subseteq R_z^T$, and since $d_U(M)=a$ it must be that $d_U(M_z^T) = a$. Additionally, this inequality implies $d_W(M_z^T) \leq d_W(M)$. As $M$ is a min-regret sum stable matching, there cannot be a stable matching with man degree $a$ and woman degree $< d_W(M)$, but as $d_W(M_z^T) \leq d_W(M)$, we must have $d_W(M_z^T) = d_W(M)$. Hence, $d(M_z^T) = d(M)$. Note that due to the choice of $M$, with $d_U(M)$ minimum among all min-regret sum stable matchings, it is not possible for $M_{opt}$ to be updated to a min-regret stable matching prior to this point in the algorithm. Therefore, $M_{opt}$ is updated to $M_z^T$ on Line \ref{sm_mrs_alg_mopt_updated}. Additionally, as $M_{opt}$ is now a min-regret stable matching it will not be possible for it to be updated to another stable matching after this point, and so $M_{opt}$ is returned on Line \ref{sm_msr_alg_return_mopt}, as required.


Truncation of our instance requires only a single pass through preference lists of men and women and is therefore an $O(m)$ operation. The man-optimal and woman-optimal stable matchings can found in $O(m)$ time both within and outwith the while loop. Since the number of potential men's degrees that we iterate over is bounded by $d_s = d_U(M_z) - d_U(M_0)$, we have an overall time complexity of $O(d_s m)$ for Algorithm MRS.
\end{proof}

\section{Experiments supplement}
\label{sm_re_exps_sec_app}

\subsection{Correctness testing}
\label{sm_re_exps_sec_correctness_app}

Correctness tests were run in the following way. In addition to the $19$ generated instance types described in Section \ref{sm_re_exps_sec}, a further two instance types were generated where $n \in \{6, 8\}$, with $5000$ instances for each type. Over all instances of the $21$ instance types, each matching output by an algorithm (one for Algorithm REDI, and multiple for the Algorithm ENUM), was tested for \emph{(1) capacity:} each man (woman) may only be assigned to one woman (man) respectively; and \emph{(2) stability:} no blocking pair exists. Additionally, the regret-equality score of the stable matchings output by each of the algorithms were compared against each other to ensure they were identical values. These tests were were written in Java and compiled using Java version $1.8.0$. Finally, for all instances types where $n \leq 50$, further correctness testing was conducted on the Algorithm ENUM to ensure that the correct number of stable matchings was produced. This was done using an \acrlong{ip} (\acrshort{ip}) model built using the \acrshort{ip} modelling framework PuLP (version $1.6.9$) \cite{MOD11} running CPLEX (version $12.8.0$) \cite{CPLEX} with Python version $2.7.15$. Similar to above, all instances were run on a single thread with $16$ instances run in parallel using GNU Parallel \cite{Tan11}. A timeout time of $30$ minutes was applied to each instance for the PuLP program, and all instances completed within the time limit. All correctness tests passed successfully.

\subsection{Experiments figures and tables}
\label{sm_re_exps_sec_f_t_app}
\FloatBarrier
This appendix section presents tables referred to in Section \ref{sm_re_exps_sec}. Instance types are labelled according to $n$, e.g., $S100$ is the instance type containing instances where $n=100$. Table \ref{sm_re_table_duration} provides data for Figure \ref{reg-eq-duration}. Tables \ref{sm_re_table_balancedScoreAv}, \ref{sm_re_table_sexEqualCostAv},  \ref{sm_re_table_egalCostAv}, \ref{sm_re_table_degreeAv}, \ref{sm_re_table_regretEqualScoreAv} and \ref{sm_re_table_sumRegretAv} provide data for the plots in Figure \ref{reg-eq-opt-plots}.

\begin{table}[h!] \centerline{\begin{tabular}{ R{1.2cm} | R{1.4cm} R{1.4cm} R{1.4cm} R{1.4cm} | R{1.4cm} R{1.6cm} R{1.4cm} R{1.4cm} }\hline\hline Case & REDI$_{av}$ & REDI$_{med}$ & REDI$_{5}$ & REDI$_{95}$ & ENUM$_{av}$ & ENUM$_{med}$ & ENUM$_{5}$ & ENUM$_{95}$ \\ 
\hline S10 & $204.9$ & $206.0$ & $180.0$ & $226.0$ & $244.8$ & $244.0$ & $215.9$ & $270.0$ \\ 
 S20 & $214.3$ & $215.0$ & $191.0$ & $236.0$ & $256.1$ & $257.0$ & $227.9$ & $283.0$ \\ 
 S30 & $220.7$ & $221.5$ & $195.0$ & $243.0$ & $264.0$ & $265.5$ & $236.0$ & $288.0$ \\ 
 S40 & $234.1$ & $235.0$ & $208.0$ & $255.0$ & $279.9$ & $281.0$ & $248.0$ & $305.0$ \\ 
 S50 & $237.1$ & $238.0$ & $213.0$ & $259.0$ & $288.7$ & $289.0$ & $260.0$ & $317.0$ \\ 
 S60 & $252.5$ & $253.5$ & $229.0$ & $276.0$ & $311.3$ & $312.0$ & $281.0$ & $340.0$ \\ 
 S70 & $261.3$ & $262.0$ & $234.0$ & $284.0$ & $320.5$ & $321.0$ & $285.0$ & $353.0$ \\ 
 S80 & $269.7$ & $270.0$ & $246.0$ & $294.0$ & $334.5$ & $334.0$ & $296.9$ & $373.0$ \\ 
 S90 & $288.4$ & $290.0$ & $262.0$ & $312.0$ & $360.9$ & $360.0$ & $325.0$ & $400.0$ \\ 
 S100 & $290.7$ & $291.0$ & $259.9$ & $321.0$ & $369.3$ & $369.5$ & $329.0$ & $412.0$ \\ 
 S200 & $362.4$ & $354.0$ & $315.0$ & $441.0$ & $533.5$ & $514.0$ & $420.9$ & $695.3$ \\ 
 S300 & $447.1$ & $440.0$ & $425.0$ & $494.0$ & $789.1$ & $751.0$ & $606.9$ & $1047.1$ \\ 
 S400 & $541.3$ & $540.0$ & $520.0$ & $562.0$ & $1278.7$ & $1127.0$ & $891.9$ & $2136.6$ \\ 
 S500 & $697.6$ & $698.0$ & $662.9$ & $734.0$ & $1961.7$ & $1761.5$ & $1201.4$ & $3156.3$ \\ 
 S600 & $965.6$ & $968.0$ & $925.9$ & $1010.0$ & $2835.6$ & $2546.0$ & $1767.9$ & $4707.9$ \\ 
 S700 & $1154.3$ & $1160.0$ & $1080.0$ & $1219.0$ & $3986.8$ & $3194.5$ & $2218.7$ & $7620.5$ \\ 
 S800 & $1396.7$ & $1402.0$ & $1312.0$ & $1485.0$ & $5708.3$ & $4496.5$ & $2801.7$ & $12450.0$ \\ 
 S900 & $1701.9$ & $1712.0$ & $1585.9$ & $1836.0$ & $9043.3$ & $5958.0$ & $3439.9$ & $20366.2$ \\ 
 S1000 & $2087.9$ & $2112.0$ & $1855.0$ & $2293.0$ & $11802.0$ & $8092.5$ & $4234.8$ & $28824.8$ \\ 
 \hline\hline \end{tabular}} \caption{A comparison of time taken to execute Algorithm REDI              and Algorithm ENUM. Here REDI$_{av}$, REDI$_{med}$, REDI$_{5}$ and REDI$_{95}$ represent the mean,              median, $5$th percentile and $95$th percentile of Algorithm REDI for a given instance type.              Similar notation is used for Algorithm ENUM. Times are in ms.} \label{sm_re_table_duration} \end{table} 

\begin{table}[] \centerline{\begin{tabular}{ R{1.2cm} | R{1.8cm} R{1.8cm} R{1.8cm} R{1.8cm} R{1.8cm} R{2cm} R{1.8cm} }\hline\hline Case & Balanced & Sex-equal & Egalitarian & Minimum regret & Regret-equal & Min-regret sum & Algorithm REDI  \\ 
\hline S10 & $32.1$ & $32.1$ & $33.3$ & $32.8$ & $32.8$ & $35.6$ & $33.4$ \\ 
 S20 & $90.5$ & $90.9$ & $95.1$ & $94.2$ & $94.0$ & $104.5$ & $96.9$ \\ 
 S30 & $165.6$ & $166.2$ & $177.5$ & $174.9$ & $176.5$ & $199.4$ & $181.5$ \\ 
 S40 & $254.9$ & $255.8$ & $273.0$ & $270.7$ & $270.2$ & $312.7$ & $278.4$ \\ 
 S50 & $357.2$ & $358.3$ & $382.0$ & $378.9$ & $379.5$ & $439.8$ & $393.0$ \\ 
 S60 & $466.4$ & $467.7$ & $495.5$ & $495.5$ & $496.5$ & $573.3$ & $516.2$ \\ 
 S70 & $588.8$ & $590.8$ & $626.5$ & $626.2$ & $629.1$ & $739.3$ & $651.4$ \\ 
 S80 & $720.1$ & $722.3$ & $769.7$ & $764.9$ & $769.9$ & $901.8$ & $798.9$ \\ 
 S90 & $861.1$ & $863.3$ & $921.7$ & $906.6$ & $914.9$ & $1054.9$ & $952.7$ \\ 
 S100 & $1004.7$ & $1007.2$ & $1073.4$ & $1062.7$ & $1063.1$ & $1245.0$ & $1103.2$ \\ 
 S200 & $2844.8$ & $2849.2$ & $3000.2$ & $3014.6$ & $3008.9$ & $3553.4$ & $3134.9$ \\ 
 S300 & $5224.1$ & $5230.1$ & $5510.4$ & $5488.6$ & $5550.5$ & $6348.2$ & $5741.3$ \\ 
 S400 & $8036.5$ & $8045.4$ & $8471.1$ & $8503.0$ & $8541.4$ & $9743.7$ & $8835.2$ \\ 
 S500 & $11215.7$ & $11223.6$ & $11740.4$ & $11891.9$ & $11857.3$ & $13577.3$ & $12352.0$ \\ 
 S600 & $14757.4$ & $14770.0$ & $15423.7$ & $15474.0$ & $15543.2$ & $18188.9$ & $16061.1$ \\ 
 S700 & $18576.7$ & $18590.2$ & $19407.7$ & $19525.4$ & $19553.7$ & $22512.5$ & $20217.5$ \\ 
 S800 & $22718.1$ & $22731.4$ & $23707.2$ & $23851.3$ & $23975.3$ & $27652.9$ & $24824.3$ \\ 
 S900 & $27098.2$ & $27113.3$ & $28198.3$ & $28678.0$ & $28667.4$ & $32719.2$ & $29707.8$ \\ 
 S1000 & $31684.8$ & $31702.0$ & $32976.9$ & $33364.7$ & $33393.2$ & $38599.1$ & $34551.0$ \\ 
 \hline\hline \end{tabular}} \caption{Mean balanced score for six different optimal stable matchings and output from Algorithm REDI.} \label{sm_re_table_balancedScoreAv} \end{table} 
\clearpage
\begin{table}[] \centerline{\begin{tabular}{ R{1.2cm} | R{1.8cm} R{1.8cm} R{1.8cm} R{1.8cm} R{1.8cm} R{2cm} R{1.8cm} }\hline\hline Case & Balanced & Sex-equal & Egalitarian & Minimum regret & Regret-equal & Min-regret sum & Algorithm REDI  \\ 
\hline S10 & $5.4$ & $5.3$ & $8.9$ & $7.0$ & $6.5$ & $12.2$ & $7.6$ \\ 
 S20 & $10.5$ & $10.0$ & $22.7$ & $17.8$ & $16.7$ & $36.2$ & $21.6$ \\ 
 S30 & $13.9$ & $13.1$ & $43.7$ & $32.1$ & $32.6$ & $75.8$ & $41.2$ \\ 
 S40 & $17.4$ & $16.1$ & $63.2$ & $48.8$ & $44.0$ & $121.8$ & $58.7$ \\ 
 S50 & $22.6$ & $21.5$ & $84.3$ & $65.1$ & $63.2$ & $169.8$ & $87.3$ \\ 
 S60 & $25.2$ & $23.1$ & $98.8$ & $80.7$ & $77.8$ & $213.6$ & $113.1$ \\ 
 S70 & $27.9$ & $25.5$ & $121.4$ & $99.8$ & $98.8$ & $290.9$ & $139.2$ \\ 
 S80 & $29.8$ & $27.6$ & $149.6$ & $118.0$ & $118.5$ & $347.5$ & $170.1$ \\ 
 S90 & $36.8$ & $33.5$ & $182.8$ & $125.6$ & $134.0$ & $378.8$ & $201.5$ \\ 
 S100 & $36.0$ & $32.6$ & $200.8$ & $150.1$ & $141.1$ & $456.9$ & $213.7$ \\ 
 S200 & $71.9$ & $66.4$ & $440.8$ & $399.6$ & $369.1$ & $1306.0$ & $598.0$ \\ 
 S300 & $101.9$ & $92.1$ & $762.3$ & $616.6$ & $709.7$ & $2067.9$ & $1060.2$ \\ 
 S400 & $135.3$ & $126.2$ & $1117.8$ & $1017.1$ & $1061.0$ & $3131.9$ & $1595.2$ \\ 
 S500 & $154.8$ & $142.5$ & $1345.4$ & $1431.4$ & $1343.2$ & $4322.4$ & $2255.0$ \\ 
 S600 & $188.3$ & $173.4$ & $1694.3$ & $1562.5$ & $1641.7$ & $6111.8$ & $2596.3$ \\ 
 S700 & $223.3$ & $207.4$ & $2071.3$ & $2014.9$ & $2053.9$ & $7077.1$ & $3268.0$ \\ 
 S800 & $233.0$ & $215.2$ & $2437.2$ & $2388.6$ & $2597.4$ & $8836.2$ & $4164.2$ \\ 
 S900 & $256.0$ & $240.1$ & $2705.8$ & $3227.4$ & $3179.2$ & $10125.8$ & $5077.3$ \\ 
 S1000 & $284.0$ & $265.0$ & $3147.4$ & $3438.4$ & $3454.8$ & $12400.0$ & $5583.6$ \\ 
 \hline\hline \end{tabular}} \caption{Mean sex-equal score for six different optimal stable matchings and output from Algorithm REDI.} \label{sm_re_table_sexEqualCostAv} \end{table} 
\clearpage
\begin{table}[] \centerline{\begin{tabular}{ R{1.2cm} | R{1.8cm} R{1.8cm} R{1.8cm} R{1.8cm} R{1.8cm} R{2cm} R{1.8cm} }\hline\hline Case & Balanced & Sex-equal & Egalitarian & Minimum regret & Regret-equal & Min-regret sum & Algorithm REDI  \\ 
\hline S10 & $58.6$ & $59.0$ & $57.7$ & $58.3$ & $58.8$ & $58.9$ & $59.1$ \\ 
 S20 & $170.3$ & $171.7$ & $167.4$ & $169.5$ & $170.7$ & $172.2$ & $172.2$ \\ 
 S30 & $316.9$ & $319.4$ & $311.3$ & $315.5$ & $319.3$ & $322.5$ & $321.8$ \\ 
 S40 & $492.1$ & $495.5$ & $482.7$ & $488.7$ & $493.6$ & $502.4$ & $498.0$ \\ 
 S50 & $691.5$ & $695.0$ & $679.7$ & $687.9$ & $692.6$ & $708.1$ & $698.8$ \\ 
 S60 & $907.2$ & $912.3$ & $892.1$ & $903.2$ & $909.7$ & $930.1$ & $919.4$ \\ 
 S70 & $1149.3$ & $1156.1$ & $1131.7$ & $1145.4$ & $1154.5$ & $1185.7$ & $1163.7$ \\ 
 S80 & $1410.2$ & $1417.1$ & $1389.8$ & $1403.3$ & $1415.2$ & $1452.9$ & $1427.7$ \\ 
 S90 & $1684.8$ & $1693.2$ & $1660.6$ & $1675.1$ & $1687.7$ & $1727.1$ & $1703.8$ \\ 
 S100 & $1973.3$ & $1981.8$ & $1945.9$ & $1963.4$ & $1976.0$ & $2028.7$ & $1992.6$ \\ 
 S200 & $5617.4$ & $5632.0$ & $5559.5$ & $5601.3$ & $5625.2$ & $5790.9$ & $5671.9$ \\ 
 S300 & $10346.3$ & $10368.1$ & $10258.5$ & $10314.5$ & $10358.0$ & $10611.2$ & $10422.3$ \\ 
 S400 & $15937.5$ & $15964.5$ & $15824.3$ & $15932.0$ & $15978.1$ & $16332.6$ & $16075.3$ \\ 
 S500 & $22276.4$ & $22304.7$ & $22135.5$ & $22276.0$ & $22308.6$ & $22800.7$ & $22449.0$ \\ 
 S600 & $29326.0$ & $29366.7$ & $29153.0$ & $29297.7$ & $29371.4$ & $30233.8$ & $29525.9$ \\ 
 S700 & $36929.9$ & $36972.9$ & $36744.2$ & $36933.6$ & $36975.4$ & $37910.1$ & $37167.1$ \\ 
 S800 & $45203.0$ & $45247.6$ & $44977.2$ & $45195.9$ & $45264.7$ & $46423.0$ & $45484.3$ \\ 
 S900 & $53940.0$ & $53986.4$ & $53690.7$ & $54003.9$ & $54053.3$ & $55265.3$ & $54338.2$ \\ 
 S1000 & $63085.6$ & $63139.0$ & $62806.5$ & $63145.7$ & $63204.4$ & $64752.1$ & $63518.5$ \\ 
 \hline\hline \end{tabular}} \caption{Mean cost for six different optimal stable matchings and output from Algorithm REDI.} \label{sm_re_table_egalCostAv} \end{table} 
\clearpage
\begin{table}[] \centerline{\begin{tabular}{ R{1.2cm} | R{1.8cm} R{1.8cm} R{1.8cm} R{1.8cm} R{1.8cm} R{2cm} R{1.8cm} }\hline\hline Case & Balanced & Sex-equal & Egalitarian & Minimum regret & Regret-equal & Min-regret sum & Algorithm REDI  \\ 
\hline S10 & $7.9$ & $7.9$ & $7.8$ & $7.6$ & $7.7$ & $7.8$ & $7.7$ \\ 
 S20 & $14.4$ & $14.5$ & $14.3$ & $13.5$ & $13.8$ & $14.1$ & $13.8$ \\ 
 S30 & $21.0$ & $21.4$ & $20.7$ & $19.3$ & $19.8$ & $20.2$ & $19.9$ \\ 
 S40 & $26.7$ & $27.1$ & $26.1$ & $24.3$ & $25.1$ & $25.7$ & $25.2$ \\ 
 S50 & $31.7$ & $32.0$ & $31.2$ & $28.7$ & $29.5$ & $30.4$ & $29.6$ \\ 
 S60 & $36.6$ & $37.0$ & $35.5$ & $33.1$ & $34.1$ & $35.1$ & $34.2$ \\ 
 S70 & $41.2$ & $41.9$ & $40.6$ & $37.4$ & $38.5$ & $39.8$ & $38.6$ \\ 
 S80 & $45.1$ & $45.6$ & $44.6$ & $41.1$ & $42.4$ & $44.0$ & $42.5$ \\ 
 S90 & $49.1$ & $49.7$ & $48.2$ & $44.6$ & $45.9$ & $47.2$ & $46.1$ \\ 
 S100 & $53.4$ & $54.1$ & $52.0$ & $48.1$ & $49.7$ & $51.2$ & $49.9$ \\ 
 S200 & $87.5$ & $88.0$ & $85.2$ & $79.3$ & $82.0$ & $84.8$ & $82.2$ \\ 
 S300 & $116.5$ & $117.5$ & $114.0$ & $104.3$ & $107.3$ & $110.8$ & $107.3$ \\ 
 S400 & $139.2$ & $139.6$ & $136.8$ & $125.5$ & $129.1$ & $133.4$ & $129.3$ \\ 
 S500 & $160.9$ & $161.2$ & $159.8$ & $146.7$ & $151.1$ & $154.4$ & $151.1$ \\ 
 S600 & $182.1$ & $182.6$ & $178.7$ & $166.0$ & $170.8$ & $177.0$ & $170.9$ \\ 
 S700 & $198.5$ & $198.9$ & $197.4$ & $181.8$ & $186.8$ & $191.9$ & $186.8$ \\ 
 S800 & $217.7$ & $218.6$ & $214.3$ & $197.9$ & $202.2$ & $210.7$ & $202.5$ \\ 
 S900 & $239.6$ & $239.4$ & $235.7$ & $215.6$ & $222.4$ & $228.7$ & $222.5$ \\ 
 S1000 & $252.7$ & $253.7$ & $252.2$ & $232.5$ & $239.5$ & $246.3$ & $239.5$ \\ 
 \hline\hline \end{tabular}} \caption{Mean degree for six different optimal stable matchings and output from Algorithm REDI.} \label{sm_re_table_degreeAv} \end{table} 
\clearpage
\begin{table}[] \centerline{\begin{tabular}{ R{1.2cm} | R{1.8cm} R{1.8cm} R{1.8cm} R{1.8cm} R{1.8cm} R{2cm} R{1.8cm} }\hline\hline Case & Balanced & Sex-equal & Egalitarian & Minimum regret & Regret-equal & Min-regret sum & Algorithm REDI  \\ 
\hline S10 & $1.8$ & $1.8$ & $2.2$ & $1.6$ & $1.5$ & $2.7$ & $1.5$ \\ 
 S20 & $3.2$ & $3.3$ & $4.1$ & $2.6$ & $2.2$ & $5.2$ & $2.2$ \\ 
 S30 & $5.1$ & $5.2$ & $6.3$ & $3.5$ & $2.9$ & $8.2$ & $2.9$ \\ 
 S40 & $6.2$ & $6.3$ & $7.5$ & $4.3$ & $3.2$ & $10.6$ & $3.2$ \\ 
 S50 & $7.2$ & $7.2$ & $8.5$ & $4.7$ & $3.6$ & $12.3$ & $3.6$ \\ 
 S60 & $8.2$ & $8.4$ & $9.4$ & $5.5$ & $4.0$ & $14.1$ & $4.0$ \\ 
 S70 & $9.2$ & $9.5$ & $11.0$ & $6.1$ & $4.0$ & $16.5$ & $4.0$ \\ 
 S80 & $9.5$ & $9.6$ & $11.9$ & $6.4$ & $4.3$ & $18.1$ & $4.3$ \\ 
 S90 & $10.1$ & $10.4$ & $12.0$ & $6.0$ & $4.2$ & $18.1$ & $4.2$ \\ 
 S100 & $11.1$ & $11.2$ & $13.1$ & $7.1$ & $4.7$ & $20.1$ & $4.7$ \\ 
 S200 & $16.5$ & $16.9$ & $18.4$ & $10.7$ & $6.9$ & $33.4$ & $6.9$ \\ 
 S300 & $23.7$ & $24.0$ & $25.7$ & $13.3$ & $8.7$ & $39.6$ & $8.7$ \\ 
 S400 & $25.4$ & $25.7$ & $28.3$ & $14.9$ & $9.0$ & $46.0$ & $9.0$ \\ 
 S500 & $28.0$ & $27.9$ & $30.8$ & $17.9$ & $10.8$ & $52.3$ & $10.8$ \\ 
 S600 & $31.0$ & $30.9$ & $34.5$ & $18.6$ & $11.3$ & $63.7$ & $11.3$ \\ 
 S700 & $32.2$ & $32.3$ & $36.1$ & $20.4$ & $12.2$ & $63.2$ & $12.2$ \\ 
 S800 & $37.0$ & $37.2$ & $39.0$ & $21.1$ & $13.9$ & $71.3$ & $13.9$ \\ 
 S900 & $42.7$ & $41.8$ & $45.5$ & $24.7$ & $14.3$ & $77.0$ & $14.3$ \\ 
 S1000 & $40.4$ & $40.1$ & $45.9$ & $25.9$ & $14.2$ & $84.6$ & $14.2$ \\ 
 \hline\hline \end{tabular}} \caption{Mean regret-equality score for six different optimal stable matchings and output from Algorithm REDI.} \label{sm_re_table_regretEqualScoreAv} \end{table} 
\clearpage
\begin{table}[] \centerline{\begin{tabular}{ R{1.2cm} | R{1.8cm} R{1.8cm} R{1.8cm} R{1.8cm} R{1.8cm} R{2cm} R{1.8cm} }\hline\hline Case & Balanced & Sex-equal & Egalitarian & Minimum regret & Regret-equal & Min-regret sum & Algorithm REDI  \\ 
\hline S10 & $13.8$ & $14.0$ & $13.3$ & $13.1$ & $13.9$ & $12.9$ & $13.9$ \\ 
 S20 & $25.5$ & $25.8$ & $24.5$ & $23.6$ & $25.3$ & $22.9$ & $25.5$ \\ 
 S30 & $36.9$ & $37.6$ & $35.0$ & $33.4$ & $36.8$ & $32.3$ & $36.9$ \\ 
 S40 & $47.2$ & $47.8$ & $44.6$ & $42.3$ & $47.0$ & $40.8$ & $47.1$ \\ 
 S50 & $56.3$ & $56.7$ & $53.7$ & $50.2$ & $55.3$ & $48.4$ & $55.5$ \\ 
 S60 & $64.9$ & $65.6$ & $61.5$ & $57.9$ & $64.2$ & $56.0$ & $64.4$ \\ 
 S70 & $73.3$ & $74.3$ & $70.1$ & $65.6$ & $73.0$ & $63.1$ & $73.1$ \\ 
 S80 & $80.7$ & $81.6$ & $77.0$ & $72.5$ & $80.4$ & $69.8$ & $80.8$ \\ 
 S90 & $88.1$ & $89.0$ & $84.3$ & $79.2$ & $87.6$ & $76.3$ & $87.9$ \\ 
 S100 & $95.7$ & $96.9$ & $90.8$ & $85.4$ & $94.8$ & $82.3$ & $95.1$ \\ 
 S200 & $158.4$ & $159.2$ & $151.7$ & $142.1$ & $157.0$ & $136.3$ & $157.4$ \\ 
 S300 & $209.4$ & $210.9$ & $202.2$ & $187.6$ & $205.8$ & $182.0$ & $206.0$ \\ 
 S400 & $252.9$ & $253.4$ & $245.3$ & $227.6$ & $249.2$ & $220.7$ & $249.6$ \\ 
 S500 & $293.8$ & $294.5$ & $288.8$ & $265.2$ & $291.4$ & $256.5$ & $291.5$ \\ 
 S600 & $333.1$ & $334.2$ & $322.8$ & $300.9$ & $330.3$ & $290.3$ & $330.5$ \\ 
 S700 & $364.8$ & $365.6$ & $358.8$ & $330.0$ & $361.3$ & $320.6$ & $361.4$ \\ 
 S800 & $398.3$ & $400.0$ & $389.6$ & $363.3$ & $390.5$ & $350.0$ & $391.1$ \\ 
 S900 & $436.5$ & $436.9$ & $426.0$ & $391.4$ & $430.5$ & $380.3$ & $430.8$ \\ 
 S1000 & $465.0$ & $467.3$ & $458.5$ & $420.9$ & $464.7$ & $408.0$ & $464.9$ \\ 
 \hline\hline \end{tabular}} \caption{Mean regret sum for six different optimal stable matchings and output from Algorithm REDI.} \label{sm_re_table_sumRegretAv} \end{table}

\FloatBarrier
\clearpage
\newpage
\subsection{Additional experiments and evaluations}
\label{sm_re_exps_sec_additional_app}

This appendix section describes further experiments and evaluations not referred to in Section \ref{sm_re_exps_sec}.

A summary of generated instance information may be seen in Table \ref{sm_re_table_instinfo}. As in Section \ref{sm_re_exps_sec_f_t_app}, instance types are labelled according to $n$, e.g., $S100$ is the instance type containing instances where $n=100$. Columns $3$ and $4$ show the mean number of stable matchings $|\mathcal{M}|_{av}$ and mean number of rotations $|R|_{av}$, respectively. 
Figure \ref{sm_re_fig_mean_num_optimal} (associated with Table \ref{sm_re_table_mean_num_optimal}) shows a bar chart of the mean number of stable matchings occurring for the six different types of optimal stable matching described above, with increasing $n$. Finally, Figure \ref{sm_re_fig_mean_num_matchings_num_opt} (associated with Table \ref{sm_re_table_mean_num_matchings_num_opt}) shows a bar chart of the mean number of stable matchings that satisfy different numbers of optimal stable matching criteria, with increasing $n$. Both of these bar charts show a reduced selection of $n$ with $n \in \{100, 400, 700, 1000\}$ (the tables show the full data).

We note the following additional results:

\begin{itemize}
	\item \emph{Balanced and sex-equal stable matchings:} In all plots of Figure \ref{reg-eq-opt-plots}, balanced and sex-equal stable matchings have remarkably similar mean scores over all instance sizes and all measures. This may be due to the similar nature of these optimality measures, where both measures involve a calculation over the cost of matchings (recall that the balanced objective involves minimising the maximum of the total cost for the men and the total cost for the women, whilst the sex-equality objective involves minimising the absolute value of the difference between the total cost for the men and the total cost for the women). This similarity was previously noted by Manlove \cite[pg. 110]{Man13}, who references work undertaken by Eric McDermid to find an instance of \acrshort{smi} in which no balanced stable matching is also a sex-equal stable matching.

	\item \emph{Frequency of different types of optimal stable matchings:} From Figure \ref{sm_re_fig_mean_num_optimal}, we can see a clear ordering for the three most frequent types of degree-based optimal stable matching. From most frequent to least frequent they are minimum regret, regret-equal and min-regret sum stable matchings. The minimum-regret stable matching may be most frequent because this optimality criterion is somewhat less constrained than the other two, as it is based only on the worst performing agent. In contrast, the regret-equal and min-regret sum stable matchings are based on the worst performing man and worst performing woman. Additionally, cost-based optimal stable matchings are likely to be more constrained than degree-based ones (due to the number of different costs and degrees possible for stable matchings of any $n$), which may account for the very low average number of stable matchings for these types.

	\item \emph{The number of optimality criteria that stable matchings satisfy:} The bar chart in Figure \ref{sm_re_fig_mean_num_matchings_num_opt} shows a clear pattern of fewer stable matchings satisfying higher numbers of optimality criteria. For smaller instances there is a trend for this to level out somewhat, which can be seen more clearly for the instance types with lowest $n$ in Table \ref{sm_re_table_mean_num_matchings_num_opt}. Taken to extreme, if there is only one stable matching in an instance, it will necessarily satisfy all optimality criteria. Thus as $n$ increases, so too does the number of stable matchings and it is therefore less likely that higher numbers of optimality criteria are satisfied by a particular stable matching.
\end{itemize}

\begin{table}[] \centerline{\begin{tabular}{ R{1.2cm} | R{1.6cm} R{1.6cm} R{1.6cm} }\hline\hline Case & $n$ & $|\mathcal{M}|_{av}$ & $|R|_{av}$ \\ 
\hline S10 & $10$ & $3.0$ & $1.8$ \\ 
 S20 & $20$ & $6.7$ & $4.2$ \\ 
 S30 & $30$ & $10.3$ & $6.3$ \\ 
 S40 & $40$ & $16.1$ & $8.9$ \\ 
 S50 & $50$ & $21.0$ & $11.2$ \\ 
 S60 & $60$ & $27.7$ & $13.7$ \\ 
 S70 & $70$ & $33.1$ & $15.8$ \\ 
 S80 & $80$ & $40.8$ & $18.1$ \\ 
 S90 & $90$ & $48.0$ & $20.4$ \\ 
 S100 & $100$ & $54.8$ & $22.7$ \\ 
 S200 & $200$ & $139.2$ & $42.4$ \\ 
 S300 & $300$ & $219.1$ & $58.9$ \\ 
 S400 & $400$ & $348.4$ & $76.2$ \\ 
 S500 & $500$ & $442.5$ & $90.3$ \\ 
 S600 & $600$ & $546.2$ & $105.7$ \\ 
 S700 & $700$ & $670.5$ & $118.8$ \\ 
 S800 & $800$ & $815.2$ & $132.5$ \\ 
 S900 & $900$ & $977.0$ & $144.0$ \\ 
 S1000 & $1000$ & $1077.5$ & $156.7$ \\ 
 \hline\hline \end{tabular}} \caption{General instance information.} \label{sm_re_table_instinfo} \end{table}

\clearpage
\newpage

\begin{figure}
	\centering
    \includegraphics[scale=0.8]{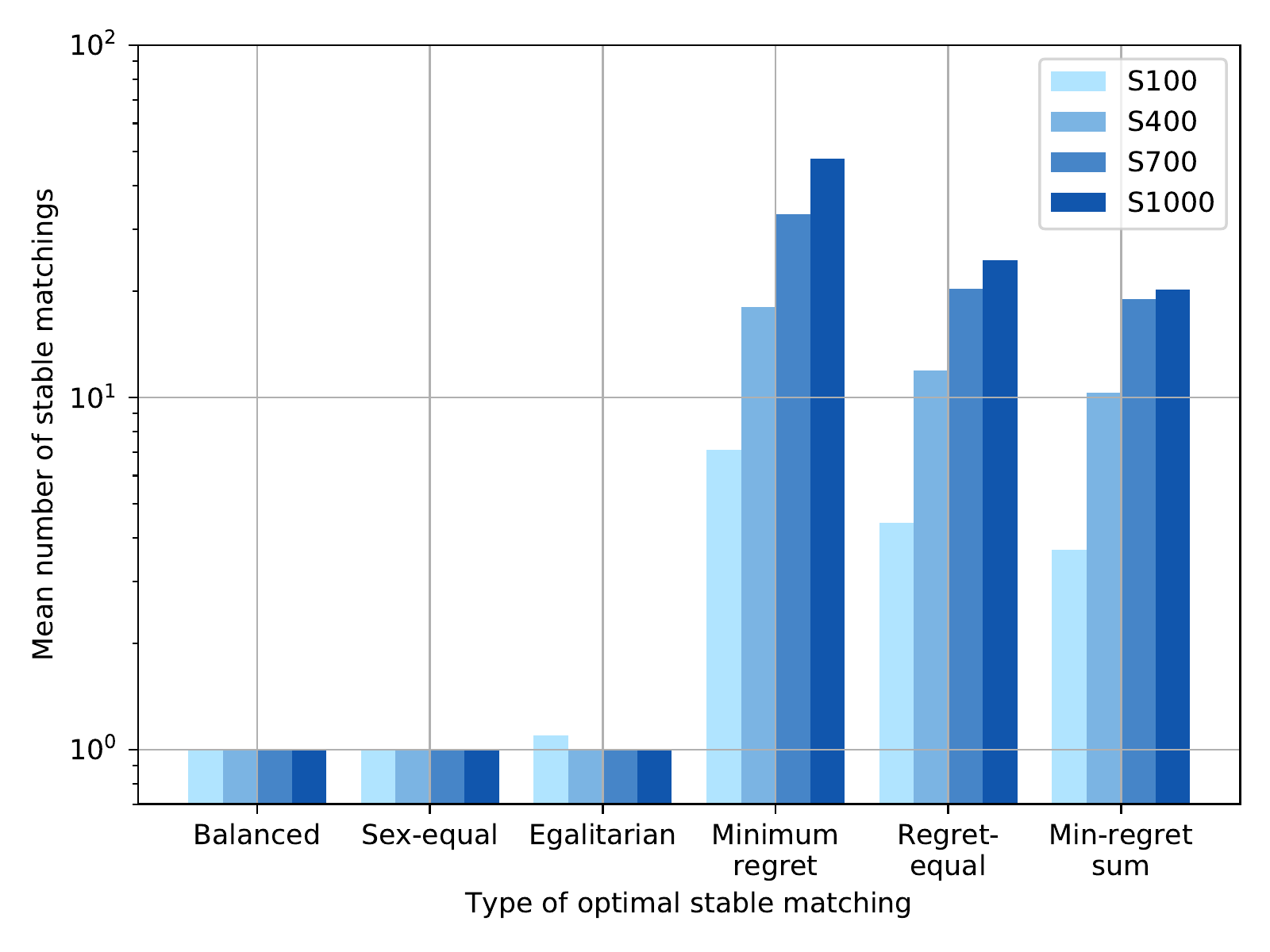}
    \caption{Bar chart of the mean number of stable matchings for different types of optimal matchings, for $n \in \{100, 400, 700, 1000\}$.}
    \label{sm_re_fig_mean_num_optimal}
\end{figure}

\begin{figure}
	\centering
    \includegraphics[scale=0.8]{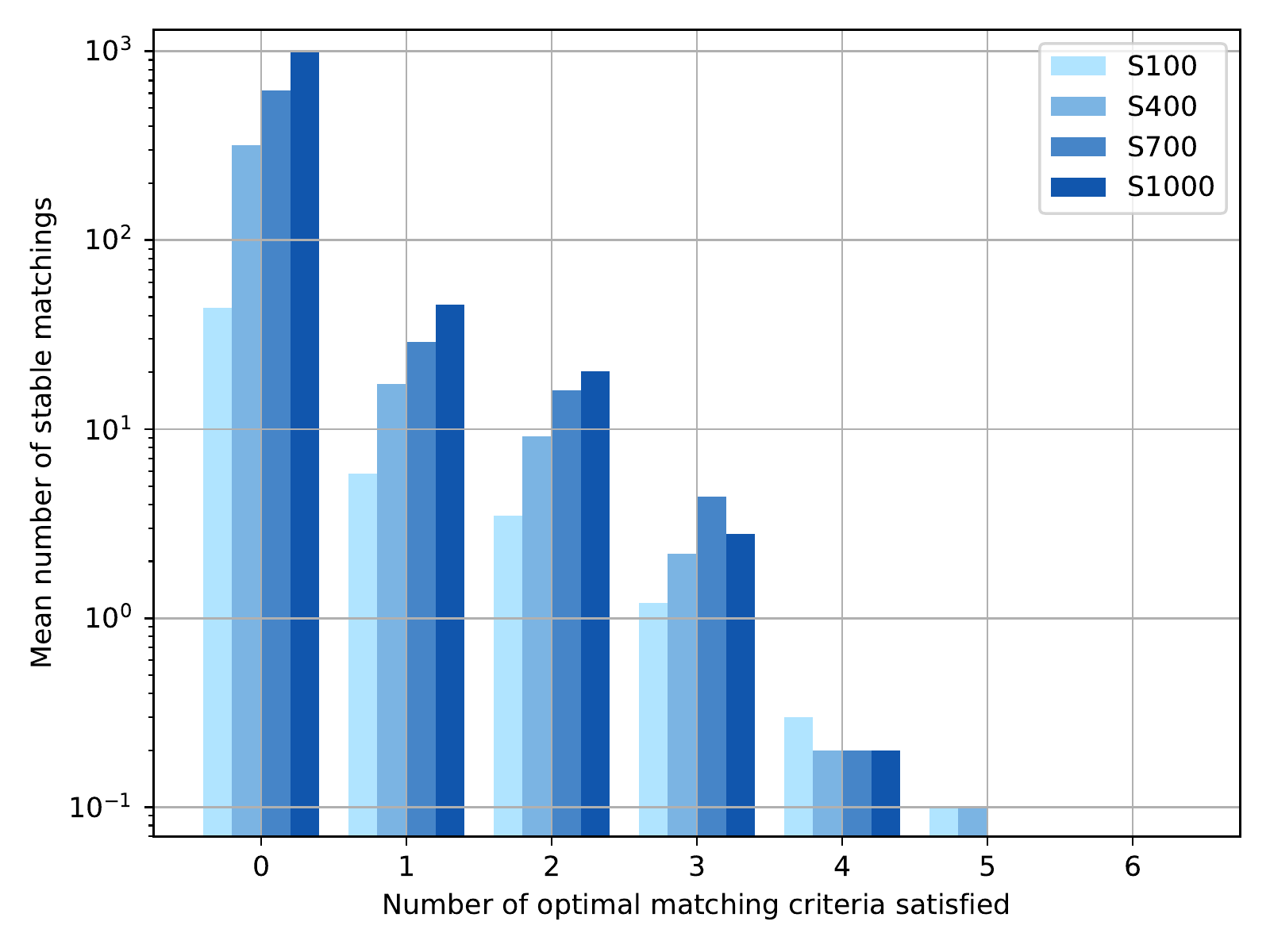}
    \caption{Bar chart of the mean number of stable matchings which satisfy different numbers of optimal stable matching criteria, for $n \in \{100, 400, 700, 1000\}$.}
    \label{sm_re_fig_mean_num_matchings_num_opt}
\end{figure}

\begin{table}[] \centerline{\begin{tabular}{ R{1.2cm} | R{2cm} R{2cm} R{2cm} R{2cm} R{2cm} R{2cm} }\hline\hline Case & Balanced & Sex-equal & Egalitarian & Minimum regret & Regret-equal & Min-regret sum  \\ 
\hline S10 & $1.1$ & $1.0$ & $1.1$ & $1.6$ & $1.3$ & $1.4$ \\ 
 S20 & $1.0$ & $1.0$ & $1.1$ & $2.1$ & $1.7$ & $1.7$ \\ 
 S30 & $1.0$ & $1.0$ & $1.1$ & $2.7$ & $1.9$ & $1.8$ \\ 
 S40 & $1.0$ & $1.0$ & $1.1$ & $3.3$ & $2.3$ & $2.1$ \\ 
 S50 & $1.0$ & $1.0$ & $1.1$ & $3.9$ & $2.7$ & $2.4$ \\ 
 S60 & $1.0$ & $1.0$ & $1.1$ & $4.7$ & $3.3$ & $3.0$ \\ 
 S70 & $1.0$ & $1.0$ & $1.0$ & $4.9$ & $3.3$ & $2.8$ \\ 
 S80 & $1.0$ & $1.0$ & $1.1$ & $5.4$ & $3.5$ & $3.3$ \\ 
 S90 & $1.0$ & $1.0$ & $1.0$ & $6.3$ & $4.3$ & $3.3$ \\ 
 S100 & $1.0$ & $1.0$ & $1.1$ & $7.1$ & $4.4$ & $3.7$ \\ 
 S200 & $1.0$ & $1.0$ & $1.0$ & $11.2$ & $7.1$ & $6.3$ \\ 
 S300 & $1.0$ & $1.0$ & $1.0$ & $16.3$ & $9.7$ & $8.8$ \\ 
 S400 & $1.0$ & $1.0$ & $1.0$ & $18.1$ & $11.9$ & $10.3$ \\ 
 S500 & $1.0$ & $1.0$ & $1.0$ & $23.7$ & $15.8$ & $13.1$ \\ 
 S600 & $1.0$ & $1.0$ & $1.0$ & $30.0$ & $17.5$ & $15.2$ \\ 
 S700 & $1.0$ & $1.0$ & $1.0$ & $33.2$ & $20.3$ & $19.0$ \\ 
 S800 & $1.0$ & $1.0$ & $1.0$ & $35.1$ & $21.2$ & $16.4$ \\ 
 S900 & $1.0$ & $1.0$ & $1.0$ & $48.6$ & $28.2$ & $21.2$ \\ 
 S1000 & $1.0$ & $1.0$ & $1.0$ & $47.7$ & $24.5$ & $20.2$ \\ 
 \hline\hline \end{tabular}} \caption{Mean number of optimal stable matchings per instance.} \label{sm_re_table_mean_num_optimal} \end{table} 
\begin{table}[] \centerline{\begin{tabular}{ R{1.2cm} | R{1.5cm} R{1.5cm} R{1.5cm} R{1.5cm} R{1.5cm} R{1.5cm} R{1.5cm} }\hline\hline Case & $0$ & $1$ & $2$ & $3$ & $4$ & $5$ & $6$  \\ 
\hline S10 & $0.7$ & $0.4$ & $0.5$ & $0.4$ & $0.3$ & $0.3$ & $0.4$ \\ 
 S20 & $3.2$ & $1.1$ & $1.0$ & $0.6$ & $0.3$ & $0.2$ & $0.2$ \\ 
 S30 & $5.9$ & $1.7$ & $1.4$ & $0.6$ & $0.4$ & $0.2$ & $0.1$ \\ 
 S40 & $10.6$ & $2.3$ & $1.8$ & $0.8$ & $0.4$ & $0.2$ & $0.1$ \\ 
 S50 & $14.7$ & $2.8$ & $2.1$ & $0.9$ & $0.4$ & $0.2$ & $0.0$ \\ 
 S60 & $20.0$ & $3.6$ & $2.6$ & $1.0$ & $0.3$ & $0.1$ & $0.0$ \\ 
 S70 & $25.0$ & $4.1$ & $2.7$ & $0.9$ & $0.3$ & $0.1$ & $0.0$ \\ 
 S80 & $32.0$ & $4.2$ & $3.3$ & $0.9$ & $0.3$ & $0.1$ & $0.0$ \\ 
 S90 & $37.9$ & $5.1$ & $3.6$ & $0.9$ & $0.3$ & $0.1$ & $0.0$ \\ 
 S100 & $43.8$ & $5.8$ & $3.5$ & $1.2$ & $0.3$ & $0.1$ & $0.0$ \\ 
 S200 & $121.5$ & $9.9$ & $6.0$ & $1.4$ & $0.3$ & $0.1$ & $0.0$ \\ 
 S300 & $194.4$ & $14.1$ & $8.4$ & $1.8$ & $0.3$ & $0.1$ & $0.0$ \\ 
 S400 & $319.5$ & $17.3$ & $9.2$ & $2.2$ & $0.2$ & $0.1$ & $0.0$ \\ 
 S500 & $404.3$ & $23.5$ & $12.5$ & $2.0$ & $0.2$ & $0.0$ & $0.0$ \\ 
 S600 & $501.3$ & $26.4$ & $16.3$ & $1.8$ & $0.3$ & $0.0$ & $0.0$ \\ 
 S700 & $620.7$ & $29.1$ & $16.0$ & $4.4$ & $0.2$ & $0.0$ & $0.0$ \\ 
 S800 & $765.0$ & $28.8$ & $17.7$ & $3.4$ & $0.3$ & $0.0$ & $0.0$ \\ 
 S900 & $905.8$ & $44.5$ & $23.8$ & $2.7$ & $0.2$ & $0.0$ & $0.0$ \\ 
 S1000 & $1008.5$ & $45.8$ & $20.3$ & $2.8$ & $0.2$ & $0.0$ & $0.0$ \\ 
 \hline\hline \end{tabular}} \caption{Mean number of stable matchings that satisfy $c$ optimality criteria, where $c$ varies on the x-axis.} \label{sm_re_table_mean_num_matchings_num_opt} \end{table}

\end{document}